\DeclareMathOperator{\Tr}{Tr}
\newcommand{\R}{{\mathbb R}}
\newtheorem{theorem}{Theorem}
\newtheorem{definition}[theorem]{Definition}
\newtheorem{lemma}[theorem]{Lemma}
\newtheorem{proposition}[theorem]{Proposition}
\newtheorem{corollary}[theorem]{Corollary}
\newtheorem{remk}[theorem]{Remark}
\newtheorem{claim}[theorem]{Claim}
\newcommand{\myalg}[2]{
\medskip
\small{
\fbox{
\parbox{5.2in}{
\textsc{#1}: { #2}
}}
\medskip
}}
\newcommand{\E}{\textbf{E}}
\renewcommand{\Pr}[1]{\ensuremath{\mathbf{Pr}[#1]}}
\newcommand{\Ex}[1]{\ensuremath{\mathbf{E}[#1]}}
\newcommand{\freq}{\ensuremath{\text{freq}}}
\newcommand{\disc}{\ensuremath{\text{disc}}}
\newcommand{\eps}{\epsilon}
\title{Approximating the Spectrum of a Graph}
\author{David Cohen-Steiner, Weihao Kong, Christian Sohler, and Gregory Valiant}
\begin{document}

\thispagestyle{empty}

\title{Approximating the Spectrum of a Graph}
\author{
David Cohen-Steiner\thanks{INRIA Sophia-Antipolis. Email:david.cohen-steiner@inria.fr}
\and
Weihao Kong
        \thanks{Department of Computer Science, Stanford University. Email: whkong@stanford.edu}
\and
Christian Sohler
				\thanks{Department of Computer Science, TU Dortmund. Email: christian.sohler@tu-dortmund.de. The author acknowledges the support of ERC grant 307696 and of the German Science Foundation, collaborative research center 876, project A6.}
\and 
Gregory Valiant
				\thanks{Department of Computer Science, Stanford University, Email: gvaliant@cs.stanford.edu}
}
\date{}

\maketitle

\begin{abstract}
The spectrum of a network or graph $G=(V,E)$ with adjacency matrix $A$, consists of the eigenvalues of the normalized
Laplacian $L= I - D^{-1/2} A D^{-1/2}$.    This set of eigenvalues encapsulates many aspects of the structure of the graph, including the extent to which the graph posses community structures at multiple scales.  We study the problem of approximating the spectrum 
$\lambda = (\lambda_1,\dots,\lambda_{|V|})$, $0 \le \lambda_1,\le \dots, \le \lambda_{|V|}\le 2$ of $G$ in the regime where the graph is too large to explicitly calculate the spectrum.   We present a sublinear time algorithm that, given the ability to query a random node in the graph and select a random neighbor of a given node, computes a succinct representation of an approximation $\widetilde \lambda 
= (\widetilde \lambda_1,\dots,\widetilde \lambda_{|V|})$, $0 \le \widetilde \lambda_1,\le \dots, \le \widetilde 
\lambda_{|V|}\le 2$ such that $\|\widetilde \lambda - \lambda\|_1 \le \epsilon |V|$.
Our algorithm has query complexity and running time $exp(O(1/\eps))$, independent of the size of the graph, $|V|$.   We demonstrate the practical viability of our algorithm on 15 different real-world graphs from the Stanford Large Network  Dataset Collection, including social networks, academic collaboration graphs, and road networks.  For the smallest of these graphs, we are able to validate the accuracy of our algorithm by explicitly calculating the true spectrum; for the larger graphs, such a calculation is computationally prohibitive.
 
In addition we study the implications of our algorithm to property testing in the bounded degree graph model. 
We prove that if the input graphs are restricted to graphs of girth $\omega(1)$ then every $\delta$-robust 
spectral graph property is constant time testable, where a graph property is spectral if the set of
graphs in the property can be specified by their spectra, and is $\delta$-robust if the set of 
spectra consists of a core and all spectra with $l_1$-distance at most $\delta n$ to this core.
\end{abstract}

\section{Introduction}

Given an undirected graph $G=(V,E)$, its normalized Laplacian matrix
is defined as $L=I - D^{-1/2} A D^{-1/2}$, where $D$ is the diagonal matrix with entries $D_{i,i}$ given by the degree of the $i$th vertex, and $A$ is the adjacency matrix of the graph.  It is not hard to see that $L$ is positive semidefinite and singular, with eigenvalues $0=\lambda_1 \le \lambda_2 \le \ldots \le \lambda_{|V|}$, whose sum is $trace(L)$.  Many structural and combinatorial properties of graphs are exposed by the eigenvalues (and eigenvectors) of the associated graph Laplacian, $L$.  For example, as was quantified in a recent series of works~\cite{KLLOT13,LGT14,LRT12}, the value of the $i$th eigenvalue provides insights into the extent to which the graph admits a partitioning into $i$ components.   Hence the spectrum provides a detailed sense of the community structures present in the graph at multiple scales.

Inspecting the spectrum of a graph also serves as a approach to evaluating the plausibility of natural generative models for families of graphs (see, e.g.~\cite{farkas2001spectra}): for example, if the spectrum of random power-law graphs does not closely resemble the spectrum of the Twitter graph, it suggests that a random power-law graph might be a poor model for the Twitter graph.  

Given the structural information contained in the spectrum of a graph's Laplacian, it seems natural to ask the following question: \emph{How much information must one collect about a graph in order to accurately approximate its spectrum?}

\subsection{Our results}

We give the first sublinear time approximation algorithm for computing the spectrum of a 
graph $G=(V,E)$. Our algorithm assumes that we can sample vertices uniformly at random from $V$
 and that we can also query for a random neighbor of a vertex $v \in V$.  This model corresponds to assuming that we can perform a random walk in $G$, as well as randomly restart such a walk.
Our algorithm performs $exp(O(1/\eps))$ such queries to the graph and outputs an approximation $\widetilde \lambda$ of the spectrum $\lambda$ of the normalized Laplacian of $G$ (see Definition~\ref{def:normL} for the formal definition of the normalized Laplacian).

\begin{theorem}
\label{theorem:mainL}
Given the ability to select a uniformly random node from a graph $G=(V,E)$, and, for a given node, query a uniformly random neighbor of that node, then with probability at least $2/3$ one can approximate the spectrum of the normalized Laplacian of $G$ to additive error $\eps$ in earth mover distance, with runtime and number of queries bounded by $exp(O(1/\eps)).$
\end{theorem}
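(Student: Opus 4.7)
The plan is to approximate $\lambda$ via the low-order moments of its empirical distribution, and then to reconstruct a distribution that matches those moments. The random-walk matrix $P = D^{-1}A$ is similar to $D^{-1/2}AD^{-1/2} = I - L$ via conjugation by $D^{1/2}$, so its eigenvalues are $1 - \lambda_i \in [-1,1]$. Writing $\hat\mu = \tfrac{1}{n}\sum_i \delta_{1-\lambda_i}$ for the associated empirical spectral measure, its $k$th moment satisfies
\[
m_k \;=\; \int_{-1}^{1} x^k\, d\hat\mu(x) \;=\; \tfrac{1}{n}\Tr(P^k) \;=\; \mathbf{E}_v\bigl[P^k(v,v)\bigr],
\]
where $v$ is a uniformly random vertex. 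My plan is thus to estimate $m_1,\dots,m_K$ for $K = \Theta(1/\eps)$ and then fit a distribution $\tilde\mu$ to these estimates; the output $\tilde\lambda$ is obtained by pushing $\tilde\mu$ forward under $x \mapsto 1-x$, assigning multiplicity $n$ times its weights, and sorting.

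An unbiased estimator for $m_k$ is immediate from the identity above: sample a uniform vertex $v$ using the first oracle, run a random walk of length $k$ using the neighbor oracle, and output the indicator that the walk ends at $v$. This is Bernoulli, so by Chernoff $T$ independent samples yield additive accuracy $O(1/\sqrt{T})$. For the reconstruction step, I would discretize $[-1,1]$ on a sufficiently fine grid and solve a feasibility linear program for a probability distribution supported on the grid whose first $K$ moments match the estimated $\hat m_1,\dots,\hat m_K$ within the estimation error. Since $\hat\mu$ itself (after snapping to the grid) is approximately feasible, such a $\tilde\mu$ exists.

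Correctness reduces to showing that matching $K$ moments to accuracy $\eta$ implies closeness in $W_1$, since $\|\tilde\lambda - \lambda\|_1 = n\cdot W_1(\hat\mu,\tilde\mu)$ after the pushforward above. Here I would invoke classical polynomial approximation: by Jackson's theorem, any $1$-Lipschitz $f:[-1,1]\to\R$ lies within $O(1/K)$ of some degree-$K$ polynomial $p$, and after expansion in the monomial basis the coefficients of $p$ are bounded by $\exp(O(K))$. Consequently, if $\tilde\mu$ matches $\hat\mu$'s first $K$ moments to within $\eta = \exp(-\Theta(K))$, then $|\int f\, d\hat\mu - \int f\, d\tilde\mu|= O(1/K) = O(\eps)$ for every $1$-Lipschitz $f$, and Kantorovich--Rubinstein duality gives $W_1(\hat\mu,\tilde\mu)=O(\eps)$. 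Propagating $\eta = \exp(-\Theta(1/\eps))$ back through the estimator requires $T = \exp(\Theta(1/\eps))$ walks per moment, each of length at most $K = O(1/\eps)$, for a total query and running-time complexity of $\exp(O(1/\eps))$ after a union bound over the $K$ moments.

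I expect the main obstacle to be this last reconstruction step, since everything hinges on the coefficient blow-up of near-best polynomial approximants to Lipschitz functions in the monomial basis being only singly exponential in $K$. A worse bound would immediately inflate the required moment precision, and hence the query complexity, beyond $\exp(O(1/\eps))$. Once this polynomial-approximation fact is in place, the remaining ingredients---the trace-moment identity, Chernoff concentration, the LP-based fitting, and a union bound---compose in a straightforward way.
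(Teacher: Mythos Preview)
Your proposal is correct and follows essentially the same route as the paper: estimate the first $O(1/\eps)$ spectral moments via return probabilities of random walks, fit a distribution to these moments by a linear program on a fine grid, and translate to the Laplacian spectrum via the similarity between $D^{-1}A$ and $I-L$. Where the paper invokes a black-box bound (its Proposition~\ref{prop1}, with $g(k)=O(3^k)$) for the Wasserstein distance between distributions with matching low-order moments, you supply the underlying argument directly via Jackson's theorem and Kantorovich--Rubinstein duality---your singly-exponential coefficient blow-up is exactly the paper's $g(k)$.
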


In the above theorem, our algorithm outputs a succinct representation of the spectrum, regarded as a discrete distributions over $[0,2]$.  This representation corresponds to approximations of of each of the $1/\eps$ quantiles of the spectrum---i.e. an approximation of the $\eps |V|$th smallest eigenvalue, the $2\eps |V|$th smallest, the $3 \eps |V|$th smallest, etc.  If desired, such a succinct representation can then be converted in linear time into a length $|V|$ vector that has $\ell_1$ distance at most $\eps|V|$ from the true vector of sorted eigenvalues of $G$.    We also note that the probability of success, $2/3$, was chosen because this is standard in the property testing literature; this probability can be trivially be boosted to any constant $\le 1$ without changing the asymptotic runtime.

Our algorithm for approximating the spectrum is based on approximating the first $O(1/\epsilon)$ ``spectral moments'', the quantities $
\frac{1}{|V|}\sum_{i=1}^{|V|} \lambda_i^\ell$ for integers $\ell = 1,2,\ldots.$ These moments are traces of matrix powers 
of the random walk matrix of $G$, allowing us to approximate them by estimating the return probabilities of random length $\ell$ walks.  Given accurate  estimates of the spectral moments, the spectrum can be recovered by essentially solving the moment-inverse problem, namely recovering a distribution whose moments closely match the estimated spectral moments.

Complementing the above general result, we also give an algorithm with a better dependence on the accuracy parameter $\eps,$ that applies to planar graphs of bounded degree (such as road networks), and generalizations of planar graphs:

\begin{theorem}
\label{theorem:mainP}
For a graph $G$ of maximum degree $d$ that are planar, or that do not contain a forbidden minor, $H$, one can approximate the spectrum of $G$ to earth mover distance $\eps$ in time and queries $\left(\frac{d}{\eps} \right)^{O(\log(1/\eps))}.$
\end{theorem}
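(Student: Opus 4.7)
The plan is to exploit the fact that bounded-degree $H$-minor-free (and planar) graphs are hyperfinite and admit a \emph{local} partition oracle of quasi-polynomial complexity, reducing the spectral approximation problem to exact spectrum computations on small subgraphs sampled via the oracle.

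First, I would invoke the local partition oracle of Levi-Ron (or, for planar graphs, an analog derived from the Lipton-Tarjan separator theorem): for bounded-degree $H$-minor-free graphs this implicitly fixes a partition $\mathcal{C}$ of $V$ into connected pieces each of size at most $s = (d/\eps)^{O(\log(1/\eps))}$, such that the number of edges with endpoints in different pieces is at most $\eps|V|$. On query $v \in V$ the oracle returns the piece $C(v) \in \mathcal{C}$ using $(d/\eps)^{O(\log(1/\eps))}$ neighbor/vertex queries to $G$, with all calls consistent with the same global $\mathcal{C}$. Next, I would argue that the normalized Laplacian $L'$ of the graph $G'$ obtained by removing the cut edges $F$ from $G$ is spectrally close to $L$: only the $O(\eps|V|)$ vertices incident to $F$ see changes in their rows or columns of $L$, so $L - L'$ has rank $O(\eps|V|)$ and operator norm $O(1)$ (since both $L$ and $L'$ have spectra in $[0,2]$). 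By Mirsky's inequality (for the trace norm),
\[ \sum_i |\lambda_i(L) - \lambda_i(L')| \le \|L - L'\|_1 \le \mathrm{rank}(L-L') \cdot \|L-L'\|_{\mathrm{op}} = O(\eps|V|), \]
so the spectra of $L$ and $L'$ have earth mover distance $O(\eps)$.

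Because $G'$ decomposes as the disjoint union of the induced subgraphs $G'[C]$ for $C \in \mathcal{C}$, the spectrum of $L'$ is exactly the multiset union of the spectra of their individual normalized Laplacians. I would therefore estimate the spectrum of $L'$---and hence of $L$---by repeatedly sampling a uniform $v \in V$, querying the oracle for $C(v)$, explicitly computing the normalized Laplacian of $G'[C(v)]$ in time $\mathrm{poly}(s)$, and returning a uniformly random eigenvalue from its spectrum. Each such draw is an unbiased sample from the empirical spectral distribution of $L'$, and $\tilde O(1/\eps^2)$ draws suffice, by standard EMD concentration, to approximate the distribution to EMD $O(\eps)$. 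Each draw costs $(d/\eps)^{O(\log(1/\eps))}$ queries for the oracle plus $\mathrm{poly}(s) = (d/\eps)^{O(\log(1/\eps))}$ time for the eigendecomposition, giving the claimed overall bound.

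The main obstacle---and the sole source of the quasi-polynomial factor---is the local partition oracle for bounded-degree minor-free graphs, whose existence at this complexity is a deep combinatorial result (Hassidim-Kelner-Nguyen-Onak, subsequently improved by Levi-Ron). A secondary subtlety is that the component Laplacians must use the post-removal degrees from $G'$ rather than the original degrees in $G$; these are determined by the oracle's boundary information combined with ordinary neighbor queries. With these pieces in hand, the remainder is standard spectral perturbation theory plus concentration of empirical distributions in earth mover distance.
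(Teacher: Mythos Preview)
Your proposal is correct and follows the paper's overall architecture exactly: invoke the Levi--Ron $(d/\eps)^{O(\log(1/\eps))}$ partition oracle, argue that deleting the $\le \eps|V|$ cut edges perturbs the spectral distribution by $O(\eps)$ in $W_1$, and then estimate the spectrum of the disconnected graph by sampling a uniform vertex, computing the normalized Laplacian of its oracle-returned component, and outputting a random eigenvalue. (One harmless inaccuracy: the Levi--Ron pieces have size $O(d^2/\eps^2)$, not $(d/\eps)^{O(\log(1/\eps))}$; the quasi-polynomial appears only in the oracle's query cost, which only helps your $\poly(s)$ bound.)

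The one substantive difference is how you justify the spectral stability under edge removal. The paper proves a dedicated lemma, via the Courant--Fischer variational characterization, that $|V|\cdot W_1(\lambda_G,\lambda_H) \le 2\,G\Delta H$ for \emph{any} two graphs differing in $G\Delta H$ edges. You instead observe that $L-L'$ is supported on the $O(\eps|V|)$ rows/columns indexed by vertices incident to a removed edge, hence has rank $O(\eps|V|)$ and operator norm $O(1)$, and then apply the trace-norm Mirsky/Lidskii inequality $\sum_i|\lambda_i(L)-\lambda_i(L')|\le \|L-L'\|_{S_1}\le \mathrm{rank}(L-L')\cdot\|L-L'\|_{\mathrm{op}}$. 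Both arguments are short and yield the same $O(\eps)$ EMD bound (the paper's gives a slightly better constant); your route is arguably more familiar to a matrix-analysis audience, while the paper's variational argument makes the graph-combinatorial dependence more transparent.
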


The proof of this improved result for bounded degree planar graphs requires two tools.  The first is the observation that the earth mover distance between the spectra of two graphs is at most twice the graph edit distance (the number of edges that must be added/removed to transform one graph into the other).  The second tool is an algorithmic gadget called a ``planar partitioning oracle'' which allows a planar graph of degree at most $d$ to be partitioned into  connected components of size $O(d/\epsilon^2)$, while removing only $\epsilon n$ edges from the graph.   Given such a decomposed graph, the spectrum can then be pieced together from approximations of the spectra of the various pieces.  

We then investigate the consequences of this algorithm for the area of property testing in bounded 
degree graphs. For this purpose we study spectral properties, i.e. properties that
are defined by sets of spectra. We show that for graphs with non-constant girth all $\delta$-robust
spectral properties are testable, i.e. properties where the sets of spectra are not ``thin''.
We believe that this is a first step towards identifying a large class of (constant time) testable 
graph properties that are not hyperfinite.


The property testing algorithm for testing a $\delta$-robust spectral property $\Pi$ in 
high girth graphs leverages the spectrum estimation algorithm as a subroutine and approximates the distance to 
the set of accepted spectra. If this distance is below a threshold, the algorithm accepts, 
otherwise, it rejects. The difficult part of the analysis is to show that the algorithm rejects 
instances that are $\epsilon$-far. The analysis of this case makes use of a recent result by Fichtenberger
et al. \cite{FPS15} that allows one to construct a small cut between a set $U$ of $\epsilon dn/4$ vertices
and the rest of the graph without changing the distribution of local neighborhoods in the graph.
Since this distribution determines the output distribution of our spectrum approximation algorithm
we also know that the spectrum is not changed much by this operation (something similar can be shown
for the graph $G[V\setminus U]$). We can apply this result to any graph that is accepted by the property
tester, if the spectrum is correctly approximated.
Then we remove all edges incident to $U$ and replace it with a graph whose spectrum is somewhat deep 
inside the set of accepted spectra. This ``moves'' the spectrum of the graph into the set of accepted
spectra. Overall, our construction makes at most $\epsilon dn$ edge modifications and thus the graph
is not $\epsilon$-far.

\subsection{Related work}

Since the 1970's, spectral graph theory has flourished and led to the development and understanding of rich connections between structural and combinatorial properties of graphs, and the eigenvalues and eigenvectors of their associated graph Laplacians (see e.g.~\cite{C97}).  From an algorithmic standpoint spectral methods provide useful tools that have been fruitfully employed to solve a number of graph problems including graph coloring, graph searches (e.g. web search), and image partitioning~\cite{M01,Shi00}.  In terms of the structural interpretations of the eigenvalues, it is easy to see that the multiplicity of the zero eigenvalue is exactly the number of connected components of a graph.  Cheeger's inequality gives a robust analog of this statement, showing a correspondence between the value of the second eigenvalue, and the extent to which the graph can be partitioned into two pieces.  Very recently, a series of works~\cite{KLLOT13,LGT14,LRT12} developed a ``higher order'' Cheeger inequality, quantifying a correspondence between the $i$th eigenvalue  and the extent to which the graph admits a partitioning into $i$ components.

There has been a great deal of work characterizing the spectrum of various models of random graphs, including Erdos-Renyi graphs~\cite{erdHos2013spectral}, and graphs that attempt to model the properties exhibited by real-world graphs and social networks, including random power-law graphs, small-world graphs, and \emph{scale-free} networks (see e.g.~\cite{farkas2001spectra,chung2003spectra}).  One way of testing the plausibility of such models is by comparing their spectrum to those of actual real-world networks, though one challenge is the computational difficulty of computing the spectrum for large graphs, which, in the worst case, requires time cubic in the number of nodes of the graph.

Beyond the graph setting, there is a significant body of work from the statistics community on estimating the spectrum of the 
covariance matrix of a high-dimensional distribution, given access to independent samples from the distribution~\cite{K08,LW12}.  As with a graph, 
the eigenvalues of the covariance matrix of a distribution contain meaningful structural information about the distribution in 
question, including quantifying the amount of low-dimensional structure.  Recently,~\cite{KV16} showed that the spectrum of the covariance of a distribution can be accurately recovered given a number of samples that is sublinear in the dimension, by leveraging a method of moments approach that directly estimates the low-order moments of the true spectral distribution.  Although that work is in a rather different setting, we borrow the overall structure, and several technical lemmas, from this moment-based approach.   

\section{Preliminaries}

Let $A$ be an $n \times n$ real-valued matrix. A value $\lambda$ is called an \emph{eigenvalue} of $A$, 
if there exists a vector $v$ such that $Av = \lambda v$. If $A$ is a symmetric matrix then 
its eigenvalues and eigenvectors are real. 
If $A=Q \Lambda Q^{-1}$ where $\Lambda$ is a diagonal matrix, we say that $A$ has an eigendecomposition.
The entries on the diagonal of $\Lambda$ are the eigenvalues and the columns of $Q$ the eigenvectors of
$A$. If $A$ is symmetric and real-valued it always has an \emph{eigendecomposition} of the form $A=Q\Lambda Q^T$, i.e. $Q$ is an 
\emph{orthogonal} matrix ($Q^{-1}= Q^T$).

Two matrices $A$ and $B$ are similar, if they can written as $A= P B P^{-1}$ for an $n \times n$ invertible matrix $P$.
Similar matrices have the same eigenvalues. We may assume w.l.o.g. that the eigenvalues satisfy $\lambda_1 \ge \lambda_2 \ge \dots \ge \lambda_n$, 
(where each eigenvalue appears with its algebraic multiplicity) and refer to this sorted list of eigenvalues as the \emph{spectrum}.  
A matrix is stochastic, if its columns are non-negative reals that sum up to 1.

Throughout, we will also view this list of eigenvalues as a distribution, consisting of $n$ equally-weighted point masses at values $\lambda_1,\ldots,\lambda_n$.  We refer to this distribution as the \emph{normalized spectral measure} or \emph{spectral distribution}.  We will be concerned with recovering this spectral distribution in terms of the \emph{Wasserstein-}$1$ distance metric (i.e. ``earth mover distance'').  We denote the earth mover distance between two real-valued distribution $p$ and $q$ by $W_1(p,q)\,$; this distance represents the minimum, over all schemes of ``moving'' the probability mass of $p$ to yield distribution $q$, where the cost per unit probability mass of moving from probability $x$ to $y$ is $|x-y|$. 

The task of learning the spectral distribution in earth mover distance is closely related to the task of learning the sorted vector of eigenvalues in $\ell_1$ distance.   This is because the $\ell_1$ distance between two sorted vectors of length $n$ is exactly $n$ times the earth mover distance between the corresponding point-mass distributions.  Similarly, given a distribution, $Q$, that is close to the spectral distribution $\mu_{\lambda}$ in Wasserstein distance, one can transform $Q$ into a length $n$ vector whose $\ell_1$ distance is at most $n\cdot W_1(Q,\mu_{\lambda})$.  (See Lemma~\ref{lem:3}.)

In the remainder of this paper we will assume that $A$ is an $n\times n$ real-valued stochastic matrix
with real eigenvalues of absolute value at most $1$ and $n$ linearly independent eigenvectors. 
In particular, we can write $A= Q \Lambda Q^{-1}$.
We use $e_i$ to denote the $i$-th vector of the standard basis of $\mathbb R^n$.

\section{Approximating the spectrum of a stochastic matrix}

In this section we consider the task of approximating the spectrum of a stochastic matrix, $A$, given a certain query access to information about $A$.  Our results on estimating the spectrum of a graph Laplacian, which we give in Section~\ref{sec:graphL}, will follow easily from the results of this section, as learning the spectrum of a graph's Laplacian is equivalent to learning the spectrum of the stochastic matrix corresponding to a random walk on the graph in question.

\subsection{Model of computation}\label{sec:model}

We will assume that we have oracle access to the matrix $A$ of the following form:
On input a number $j$ the oracle provides us with a value $\{1,\dots,n\}$ distributed
according to the $j$-th column of $A$. This type of access to $A$ allows us to perform
a random walk on $A$.   We note that the time it takes to actually implement such an oracle depends on how the graph is represented.  If the graph is stored via adjacency lists then the oracle can be implemented in time $O(d)$ per oracle call; if the neighboring vertices are stored as arrays and the node degrees are also stored, this oracle can be implemented in time constant time per call.

\subsection{Approximating the spectral moments}\label{sec:a1}

We proceed via the method of moments: we first obtain accurate estimates of the low-order moments of the spectral distribution, and then leverage these moments to yield the spectral distribution.

\begin{definition}
Let $A=Q \Lambda Q^{-1}$ be a stochastic $n \times n$ matrix with 
real eigenvalues $1 \ge \lambda_n
\ge \dots \ge \lambda_1 \ge -1$. The $\ell$-th moment of the spectrum of $A$ is defined as $ \frac{1}{n}\sum_{i=1}^n \lambda_i^{\ell}.$
\end{definition}

We will leverage the fact that the trace of a matrix $A$ equals $n$ times the first moment and the trace of 
$A^i =Q \Lambda^i Q^{-1}$ equals $n$ times the $i$-th spectral moment, i.e. $$
\Tr(A^i) = \sum_{i=1}^n \lambda^i.$$

At the same time, we can also view the trace of $A$ as the sum of return probabilities of a
random walk using the transition probabilities of $A$, i.e.
\begin{eqnarray*}
\Tr(A^i) & = & \sum_{j=1}^{n} e_j^T A^i e_j \\
& = & \sum_{1=1}^n \Pr{\text{$i$-step Rand. Walk from $j$
returns to $j$}}.
\end{eqnarray*}
Next we note that we can view 
$$
\frac{1}{n} \dot \sum_{1\le j \le n} \Pr{\text{$i$-step Random Walk from $j$ returns to $j$}}
$$
as the expected return probability of a random walk starting at $j$ when $j$ is chosen 
uniformly at random from $\{1,\dots,n\}$. Thus, given access to $A$ as described in Section~\ref{sec:model}, the following algorithm can be used as 
an unbiased estimator for the spectral moments:

\begin{center}
\myalg{ApproxSpectralMoment($A, \ell, s$)}{
\begin{tabbing}
\hspace{0.5cm}\= {\bf for} $i=1$ {\bf to} $s$ \\
\>\hspace{0.5cm}\= pick $j\in\{1,\dots,n\}$ uniformly at random\\
\>\> $w=j$\\
\>\> {\bf for} $k=1$ {\bf to} $\ell$ {\bf do}\\
\>\>\hspace{0.5cm}\= Let $w'$ be drawn from the distribution of the \\
\>\> $w$-th column of $A$\\
\>\>\> $w = w'$\\
\>\> {\bf if} $w=j$ {\bf then} $X_i =1$ {\bf else} $X_i=0$\\
 \> {\bf return} $\frac{1}{s} \cdot \sum_{i=1}^s X_i$
\end{tabbing}}
\end{center}

The following lemma follows directly from a Hoeffding bound on the sum of independent $0/1$ random variables.
\begin{lemma}\label{lem:1}
Let $s \ge \frac{1}{2} \epsilon^{-2} \ln (2/\delta)$
Given access to the column distributions of a stochastic $n \times n$ matrix $A = Q \Lambda Q^{-1}$
with real eigenvalues $1 \ge \lambda_n,\dots, \ge \lambda_1 \ge -1$,
algorithm {\sc ApproxSpectralMoment}$(A,\ell,s)$ approximates with probability at least 
$1-\delta$ the $\ell$-th spectral moment of $A$ within an additive error $\epsilon$.
The algorithm has a running time of $O(s\ell)$. 
\end{lemma}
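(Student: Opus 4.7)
The plan is to establish unbiasedness of each indicator $X_i$ and then invoke a standard Hoeffding bound; the running time is immediate from reading off the algorithm.

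First I would show that each $X_i$ is an unbiased estimator of the $\ell$-th spectral moment. Fix one iteration. Conditioned on the starting vertex $j$, the probability that the $\ell$-step walk returns to $j$ is exactly $e_j^T A^\ell e_j$, since $A^\ell$ is the $\ell$-step transition matrix and $(A^\ell)_{j,j}$ is the probability of being at $j$ after $\ell$ steps when starting from $j$. Averaging over $j$ chosen uniformly from $\{1,\dots,n\}$ gives
$$\mathbb{E}[X_i] \;=\; \frac{1}{n}\sum_{j=1}^n e_j^T A^\ell e_j \;=\; \frac{1}{n}\,\Tr(A^\ell).$$
Since $A = Q \Lambda Q^{-1}$, we have $A^\ell = Q \Lambda^\ell Q^{-1}$ and trace is similarity-invariant, so $\Tr(A^\ell) = \Tr(\Lambda^\ell) = \sum_{i=1}^n \lambda_i^\ell$. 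Hence $\mathbb{E}[X_i]$ equals the $\ell$-th spectral moment.

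Next, since the $s$ trials are independent and each $X_i \in \{0,1\}$, Hoeffding's inequality yields
$$\Pr\!\left[\left|\frac{1}{s}\sum_{i=1}^s X_i \;-\; \frac{1}{n}\sum_{i=1}^n \lambda_i^\ell\right| > \epsilon \right] \;\le\; 2\exp(-2 s \epsilon^2).$$
Requiring the right-hand side to be at most $\delta$ gives $s \ge \tfrac{1}{2}\epsilon^{-2}\ln(2/\delta)$, matching the hypothesis. For the runtime, each of the $s$ outer iterations does $\ell$ random-walk steps, each of which is a single oracle call to the column distribution of $A$, and (if the oracle response is counted as constant time) the total work is $O(s\ell)$.

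The proof involves no real obstacle; the only subtlety is the identification of the empirical return probability with a spectral moment via the trace formula, but this is exactly the observation already made in the discussion preceding the algorithm. Everything else is a textbook concentration argument.
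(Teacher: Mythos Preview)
Your proof is correct and follows exactly the paper's approach: the paper explicitly says the lemma ``follows directly from a Hoeffding bound on the sum of independent $0/1$ random variables,'' and the unbiasedness argument you give is precisely the trace/return-probability identification already spelled out in the text preceding the algorithm. There is nothing to add.
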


\subsection{Approximating the spectrum from its moments}

In this section we restate results from \cite{KV16} showing that the spectrum can be accurately reconstructed from estimates of the first $\ell$ spectral moments:

\begin{proposition}[Proposition 1 in~\cite{KV16}]
\label{prop1}
Given two distributions with respective density functions $p,q$ supported on $[a, b]$
whose first $k$ moments are $\alpha = (\alpha_1,\dots,\alpha_k)$ and $\beta = (\beta_1,\dots,\beta_k)$, 
respectively, the Wasserstein distance, $W_1(p, q)$, between $p$ and $q$ is bounded by:
$$
W_1(p, q) \le C \cdot \frac{b-a}{k} + g(k) (b-a) \|\alpha - \beta \|_2
$$
where C is an absolute constant, and $g(k) = C'3^k$ for an absolute constant C'.
\end{proposition}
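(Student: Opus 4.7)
The natural route is via Kantorovich--Rubinstein duality, which characterizes
$$W_1(p,q) = \sup_{f:\,\|f\|_{\mathrm{Lip}}\le 1}\Big|\mathbb{E}_p[f]-\mathbb{E}_q[f]\Big|,$$
so the goal reduces to bounding $|\mathbb{E}_p[f] - \mathbb{E}_q[f]|$ uniformly over all $1$-Lipschitz $f$ on $[a,b]$. Because only differences of expectations are relevant, we may subtract a constant from $f$ and assume $\|f\|_\infty \le (b-a)$.

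The plan is to replace $f$ by a polynomial test function of degree $k$ so that the moment differences $\alpha - \beta$ enter. First I would invoke Jackson's theorem: for a $1$-Lipschitz function on $[a,b]$, there exists a polynomial $P$ of degree at most $k$ with $\|f-P\|_\infty \le C_0 (b-a)/k$. Splitting
$$\mathbb{E}_p[f] - \mathbb{E}_q[f] = \bigl(\mathbb{E}_p[f-P] - \mathbb{E}_q[f-P]\bigr) + \bigl(\mathbb{E}_p[P] - \mathbb{E}_q[P]\bigr),$$
the first term is absorbed into the $C(b-a)/k$ summand of the claim. Writing $P(x) = \sum_{j=0}^k c_j x^j$, the zeroth-order term cancels (both distributions integrate to $1$), and Cauchy--Schwarz gives
$$\bigl|\mathbb{E}_p[P]-\mathbb{E}_q[P]\bigr| \;=\; \Big|\sum_{j=1}^k c_j (\alpha_j-\beta_j)\Big| \;\le\; \|c\|_2 \, \|\alpha-\beta\|_2.$$

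The crux is then to bound $\|c\|_2$ in terms of $\|P\|_\infty \le \|f\|_\infty + C_0(b-a)/k = O(b-a)$. This is where the $3^k$ factor appears. I would pass to the rescaled interval $[-1,1]$ via $u = (2x-a-b)/(b-a)$ and expand the rescaled polynomial $\tilde P$ in Chebyshev polynomials: $\tilde P(u) = \sum_{j=0}^k a_j T_j(u)$ with $|a_j| \le 2\|\tilde P\|_\infty = 2\|P\|_\infty$. Using the classical bound that the monomial coefficients of $T_j$ are at most $2^{j}$ in absolute value (and there are at most $j+1$ of them), converting back to the monomial basis and then unscaling from $u$ to $x$ yields $\|c\|_2 \le C' 3^k \|P\|_\infty$, tracking the $(b-a)$ factors so that exactly one power of $(b-a)$ survives (the rescaling contributes a $(b-a)^{-j}$ per monomial, which is absorbed into the moment $\alpha_j$ being itself naturally of scale $(b-a)^j$; equivalently, one verifies the identity for $[-1,1]$ and checks that $W_1$ and the right-hand side scale linearly under affine reparametrizations of the support).

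The main obstacle is the third step: carefully establishing that $\|c\|_2 = O(3^k) \cdot \|P\|_\infty \cdot (b-a)$ with the right dependence on $(b-a)$, rather than a wasteful $(b-a)^k$ factor. The cleanest way is to do all estimates after reducing to $[-1,1]$ by affine change of variables, observing that $W_1$ transforms by a factor $(b-a)/2$ on the left and that $\|\alpha-\beta\|_2$ on the right similarly rescales, so that the single factor $(b-a)$ in the statement is exactly what the scaling bookkeeping produces. The Jackson approximation and the Chebyshev coefficient bound are both classical, so once the scaling is handled the proof is essentially a compilation of known ingredients.
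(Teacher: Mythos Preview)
The paper does not prove this proposition: it is quoted verbatim as ``Proposition 1 in~\cite{KV16}'' and used as a black box. So there is no in-paper proof to compare against. Your outline---Kantorovich--Rubinstein duality, Jackson's theorem to approximate the $1$-Lipschitz test function by a degree-$k$ polynomial, then Cauchy--Schwarz together with a Chebyshev-based bound on the monomial coefficient vector---is exactly the standard argument, and it is the one used in the cited reference. The recurrence $T_{j+1}=2xT_j-T_{j-1}$ gives that the $\ell_1$ norm of the monomial coefficients of $T_j$ is at most $(1+\sqrt{2})^j<3^j$, which after summing over $j\le k$ and combining with $|a_j|\le 2\|P\|_\infty$ yields $\|c\|_2\le \|c\|_1\le C'3^k\|P\|_\infty$, so your Step~3 goes through cleanly on $[-1,1]$.

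Your instinct about the $(b-a)$ bookkeeping is right: the inequality as stated is not invariant under affine reparametrization of the support (the moment vector $\alpha$ does not transform homogeneously), so one cannot simply ``check that both sides scale the same way.'' The clean route is to prove the bound directly on $[-1,1]$, which is the only case the paper actually uses (the random-walk spectrum lives in $[-1,1]$, the Laplacian spectrum in $[0,2]$). For a general $[a,b]$ with $\max(|a|,|b|)\le b-a$ (which covers both of those), your argument gives the stated form with the single $(b-a)$ factor coming from $\|P\|_\infty\le\|f\|_\infty+C_0(b-a)/k=O(b-a)$; for arbitrary $[a,b]$ the statement as written is somewhat loose, but that is a feature of the quoted proposition, not of your proof.
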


As in~\cite{KV16}, given estimates of the spectral moments, we can recover a distribution whose moments (scaled by a factor of $n$) closely match the estimated moments by solving the natural linear program:

\begin{center}
\myalg{MomentInverse}{\\
\textbf{Inputs:} Vector $\hat{\alpha}$ consisting of the first $\ell$ approximate moments for a distribution supported on the interval $[a,b]$, and a parameter $\eps>0$.\\
\textbf{Output:} Distribution $\bf{p}$.
\begin{enumerate}
\item Define $\bf{x}$ $= x_0,\ldots,x_t$ with $x_i = a + i\eps$ and $t=\lceil \frac{b-a}{\eps} \rceil.$
\item Let $\bf{p^+}$ $=(p^+_0,\ldots,p^+_t)$ be the solution to the following linear program, which should be interpreted as a distribution with mass $p^+_i$ at location $x_i$:
\begin{equation}
\begin{aligned}
\label{eqn:l1obj}& \underset{\bf{p}}{\text{minimize}} & & \|\bf{V}\bf{p}-\hat{\alpha} \|_1 \\ 
& \text{subject to}                        & & \bf{1}^T\bf{p}=1\\
&                                                    & & \bf{p}>0,
\end{aligned}
\end{equation}
where the matrix $\bf{V}$ is defined to have entries $\bf{V}_{i,j} $ $= x_j ^i.$
\item Return distribution $\bf{p^+}.$
\end{enumerate}}
\end{center}

The following lemma leverages Proposition~\ref{prop1} to characterize the performance guarantees of the above algorithm.

\begin{lemma}\label{lem:2}
Consider a distribution $p$ supported on the interval $[a,b]$, and let $\alpha$ denote the vector of its first $\ell$ moments.  Let $\hat{p}$ denote the output of running the MomentInverse algorithm on input $\hat{\alpha},a,b,\eps$.  Then the earthmover distance between $p$ and $\hat{p}$ satisfies: 
$W_1(p,\hat{p}) \le C \frac{b-a}{\ell} + g(\ell)(b-a)\left(\|\alpha-\hat{\alpha}\|_1 + \ell \left((\max(|a|,|b|)+\eps)^\ell-(\max(|a|,|b|))^\ell \right)\right),$\\ where $C$ is an absolute constant and $g(\ell) = O(3^\ell)$ as in Proposition~\ref{prop1}.
\end{lemma}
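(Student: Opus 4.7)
The plan is to apply Proposition~\ref{prop1} to $p$ and $\hat p$, with the mild wrinkle that $p$ lives on $[a,b]$ while $\hat p$ is supported on the grid $\{x_0,\ldots,x_t\}$ of spacing $\eps$. I would first pass to a discretization $\tilde p$ of $p$ obtained by rounding each point of mass to its nearest grid point. Since the rounding moves each point by at most $\eps$, we have $W_1(p,\tilde p)\le \eps$, so by the triangle inequality for $W_1$ it suffices to bound $W_1(\tilde p,\hat p)$ up to an additive $\eps$ (which is harmless, being dominated by the $C(b-a)/\ell$ term in the final bound).

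Applying Proposition~\ref{prop1} to $\tilde p$ and $\hat p$ gives $W_1(\tilde p,\hat p) \le C(b-a)/\ell + g(\ell)(b-a)\|\tilde\alpha - \hat\beta\|_2$, where $\tilde\alpha$ and $\hat\beta$ are the moment vectors of $\tilde p$ and $\hat p$. Using $\|\cdot\|_2\le\|\cdot\|_1$ and the triangle inequality, I split $\|\tilde\alpha-\hat\beta\|_1\le \|\tilde\alpha-\hat\alpha\|_1+\|\hat\alpha-\hat\beta\|_1$. The key input is LP optimality: $\hat p$ was chosen to minimize $\|Vp - \hat\alpha\|_1$ over grid-supported distributions, and $\tilde p$ is itself feasible, so $\|\hat\alpha - \hat\beta\|_1 = \|V\hat p - \hat\alpha\|_1 \le \|V\tilde p - \hat\alpha\|_1 = \|\tilde\alpha - \hat\alpha\|_1$. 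Combining and inserting $\alpha$ by one more triangle inequality yields $\|\tilde\alpha-\hat\beta\|_1 \le 2\|\alpha-\hat\alpha\|_1 + 2\|\alpha - \tilde\alpha\|_1$; the factor of $2$ is absorbed into the already-exponential $g(\ell)=O(3^\ell)$.

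The remaining task is to bound $\|\alpha - \tilde\alpha\|_1$, the damage inflicted on the first $\ell$ moments by rounding. Writing $M=\max(|a|,|b|)$ and noting that $|y|,|\tilde y|\le M+\eps$ together with the telescoping identity $y^i-\tilde y^i=(y-\tilde y)(y^{i-1}+y^{i-2}\tilde y+\cdots+\tilde y^{i-1})$ gives $|y^i - \tilde y^i| \le i(M+\eps)^{i-1}\eps$ whenever $|y-\tilde y|\le\eps$. Integrating against $p$ yields $|\alpha_i-\tilde\alpha_i|\le i(M+\eps)^{i-1}\eps$ for each $i$. Summing over $i=1,\ldots,\ell$ is then dominated by $\ell((M+\eps)^\ell-M^\ell)$: by the mean value theorem this quantity equals $\ell\cdot\ell\xi^{\ell-1}\eps$ for some $\xi\in(M,M+\eps)$, which absorbs the full sum. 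This matches precisely the expression in the lemma.

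The step I expect to be the main obstacle is this last moment-perturbation bound: reshaping the natural $\sum_i i(M+\eps)^{i-1}\eps$ into the clean $\ell((M+\eps)^\ell-M^\ell)$ form of the lemma requires some care -- particularly in the regime $M+\eps<1$, where the individual terms $(M+\eps)^i-M^i$ are not monotone in $i$, so one must aggregate more carefully (or argue that in that regime a coarser bound already suffices). All other pieces (the invocation of Proposition~\ref{prop1}, the triangle inequalities, and the LP optimality giving $\|\hat\alpha-\hat\beta\|_1\le \|\tilde\alpha-\hat\alpha\|_1$) are essentially routine.
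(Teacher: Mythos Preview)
Your argument is essentially the paper's: discretize $p$ to a grid-supported feasible point, use LP optimality to bound $\|\hat\alpha-\hat\beta\|_1$ by the objective value at the discretization, and then apply Proposition~\ref{prop1} together with $\|\cdot\|_2\le\|\cdot\|_1$. The paper's proof is terser---it applies Proposition~\ref{prop1} directly to $p$ and $\hat p$ and bounds the $i$-th moment perturbation by $(M+\eps)^i-M^i$ so that the sum $\sum_{i=1}^\ell\bigl((M+\eps)^i-M^i\bigr)$ appears directly---and your concern about repackaging this as $\ell\bigl((M+\eps)^\ell-M^\ell\bigr)$ in the regime $M+\eps<1$ is legitimate but irrelevant for the paper's application, where $M=\max(|a|,|b|)=1$.
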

\begin{proof}
First note that there is a feasible solution to the linear program with objective value at most $\|\alpha-\hat{\alpha}\|_1 + \sum_{i=1}^\ell \left((\max(|a|,|b|)+\eps)^i-(\max(|a|,|b|))^i \right)$ as this is the objective value that would be obtained by discretizing distribution $p$ to be supported at the $\eps$-spaced grid points $x_0,\ldots.$   This quantity hence provides a bound on the $\ell_1$ norm of the difference between the true moments, $\alpha$, and the moments of the distribution returned by the algorithm; since the $\ell_2$ norm is at most the $\ell_1$ norm, this quantity also provides a bound on the $\ell_2$ norm of the discrepancy in moments.  The desired lemma now follows from  applying Proposition~\ref{prop1}.
\end{proof}

\subsection{Approximating the spectrum of $A$}\label{sec:spA}

We now assemble the above components to yield the following theorem characterizing our ability to recover the spectral distribution.

\begin{theorem}
\label{theorem:approx}
Given access to the column distributions of a stochastic $n \times n$ matrix $A=Q \Lambda Q^{-1}$
with real eigenvalues $1\ge \lambda_n,\dots,\lambda_1 \ge -1$, with probability $2/3$
we can approximate the spectrum of $A$ with additive error $\epsilon$ in earth mover distance with running time and query complexity $e^{O(1/\eps)}$.
\end{theorem}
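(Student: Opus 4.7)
The plan is to compose the two subroutines developed earlier in the section: first run \textsc{ApproxSpectralMoment} to obtain estimates $\hat\alpha_1,\ldots,\hat\alpha_\ell$ of the first $\ell$ spectral moments, then feed these into \textsc{MomentInverse} to obtain a candidate spectral distribution $\hat p$. The guarantee of Lemma~\ref{lem:2} then splits the earth mover error into three terms (a truncation term, a moment-accuracy term, and an LP-discretization term), and I tune the parameters $\ell$, the per-moment sample size $s$, and the grid spacing $\eps'$ of \textsc{MomentInverse} so that each term is at most $\epsilon/3$.

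First I would set $\ell = \lceil 6C/\epsilon \rceil$, where $C$ is the absolute constant of Proposition~\ref{prop1}; since the spectrum lies in $[a,b]=[-1,1]$, the first term $C(b-a)/\ell$ in Lemma~\ref{lem:2} becomes at most $\epsilon/3$. With this $\ell$ fixed we have $g(\ell)=O(3^\ell)=e^{O(1/\epsilon)}$. To kill the third term $g(\ell)(b-a)\ell((1+\eps')^\ell-(1)^\ell)$ I choose the LP grid spacing $\eps' = \epsilon/e^{c/\epsilon}$ for a sufficiently large constant $c$; then $(1+\eps')^\ell - 1 \le 2\ell\eps'$ and the third term is bounded by $\epsilon/3$. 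Killing the middle term $g(\ell)(b-a)\|\alpha-\hat\alpha\|_1$ requires $\|\alpha-\hat\alpha\|_1 \le \epsilon/(6\,g(\ell))$, i.e.\ each of the $\ell$ moment estimates needs additive accuracy $\eta = \epsilon/(6\ell g(\ell)) = e^{-O(1/\epsilon)}$.

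Next I would apply Lemma~\ref{lem:1} with error $\eta$ and failure probability $\delta = 1/(3\ell)$ to each of the first $\ell$ moments. This demands $s = O(\eta^{-2}\log(1/\delta)) = e^{O(1/\epsilon)}$ length-$\ell$ random walks per moment; by a union bound, with probability at least $2/3$ every moment estimate satisfies $|\hat\alpha_i-\alpha_i| \le \eta$, so $\|\alpha-\hat\alpha\|_1 \le \ell\eta \le \epsilon/(6g(\ell))$ as required. The total query count is $\ell \cdot s \cdot \ell = e^{O(1/\epsilon)}$, and \textsc{MomentInverse} is a linear program with $O(1/\eps') = e^{O(1/\epsilon)}$ variables and $O(\ell)$ constraints, which can be solved in time $e^{O(1/\epsilon)}$. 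Plugging the three $\epsilon/3$ bounds into Lemma~\ref{lem:2} yields $W_1(p,\hat p) \le \epsilon$.

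The only real obstacle is the $g(\ell)=O(3^\ell)$ factor in Proposition~\ref{prop1}: it forces both the per-moment accuracy $\eta$ and the LP grid $\eps'$ to be exponentially small in $1/\epsilon$, which in turn drives the $e^{O(1/\epsilon)}$ sample and runtime complexity. Balancing the three terms against one another is routine once $\ell = \Theta(1/\epsilon)$ is fixed, so no new ideas beyond the existing Lemmas~\ref{lem:1} and~\ref{lem:2} are needed; the theorem is essentially a bookkeeping composition.
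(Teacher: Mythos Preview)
Your proposal is correct and follows essentially the same approach as the paper: compose \textsc{ApproxSpectralMoment} with \textsc{MomentInverse}, choose $\ell = \Theta(1/\epsilon)$ to kill the first term of Lemma~\ref{lem:2}, then pick the moment accuracy and the LP grid spacing exponentially small in $1/\epsilon$ to kill the remaining two terms against the $g(\ell)=O(3^\ell)$ blowup. The only differences from the paper's proof are cosmetic (you split the error budget as $\epsilon/3$--$\epsilon/3$--$\epsilon/3$ rather than $\epsilon/2$--$\epsilon/4$--$\epsilon/4$, and you spell out the union bound over the $\ell$ moment estimates explicitly).
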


\begin{proof}
The algorithm will  accurately estimate the first $\ell= O(1/\eps)$ spectral moments via Algorithm ApproxSpectralMoment to within accuracy $e^{O(1/\eps)}$
with overall error probability bounded by $1/3$, and then will apply Algorithm MomentInverse to recover a distribution that roughly match\-es the recovered moments.
The proof will follow by assembling Lemmas~\ref{lem:1} and~\ref{lem:2}.  Let the number of spectral moments to estimate be $\ell = c_1/\eps$ for a suitable absolute constant $c_1$, chosen so that the first term in the earth mover bound of Lemma~\ref{lem:2} is at most $\eps/2$.  We will choose the parameter $s$ of Algorithm ApproxSpectralMoment to be $e^{c_2/\eps},$ for a suitable constant $c_2$, so as to guarantee that with probability at least $2/3,$ all $\ell$ spectral moments will be estimated to within error $e^{c_3/\eps},$ where the constant $c_3$ is selected so that the bound from the $\|\alpha-\hat{\alpha}\|$ portion of the second term is at most $\eps/4.$  Finally, the discretization parameter in the support of the linear program of MomentInverse will be chosen to be $e^{c_4/\eps}$, for a constant $c_4$ so as to ensure that the contribution from the final portion of the bound of Lemma~\ref{lem:2} is also bounded by $\eps/4$.  
\end{proof}

While the MomentInverse algorithm returns a distribution $\hat{p}$ described via $e^{O(1/\eps)}$ numbers, we note that there is a simple algorithm, computable in $O(n\,e^{O(1/\eps)})$ time, that will convert $\hat{p}$ into a vector $v$ of length $n$, with the property that the earth mover distance between the spectral distribution $p$ and the distribution associated with $v$ (consisting of $n$ equally-weighted point masses at the locations specified by $v$) is at most the distance between $p$ and $\hat{p}$.

\begin{center}
\myalg{DiscretizeSpectrum($n,\bf{q}$)}{
\\ \textbf{Input:} Distribution $\bf{q}$ consisting of a finite number of point masses, integer $n$.\\
\textbf{Output:} Vector $\bf{v}$ $=(v_1,\ldots,v_n).$
\begin{enumerate}
\item Let $f_q:[0,1] \rightarrow \mathbb{R}$ be defined to be the non-decreasing function with the property that for $X$ drawn uniformly at random from the interval $[0,1]$, the distribution of $f_q(X)$ is $\bf{q}.$
\item Set $$v_i = \E \left[ f_q(X)| X \in [\frac{i-1}{n},\frac{i}{n}] \right],$$ and return $\bf{v}$ $=(v_1,\ldots,v_n).$
\end{enumerate}}
\end{center}

\begin{lemma}\label{lem:3}
Consider a distribution $p$ that consists of $n$ equally weighted point masses.  Let $q$ be any distribution consisting of a finite number of point masses, and let $q_v$ denote the distribution consisting of $n$ equally weighted point masses located at the values specified by the vector $v$ returned by running Algorithm DiscretizeSpectrum on inputs $n$ and $q$.  Then the earth mover distance between $p$ and $q_v$ satisfies $$W_1(p,q_v) \le W_1(p,q).$$
\end{lemma}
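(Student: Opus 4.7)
The plan is to use the classical representation of the one-dimensional Wasserstein-$1$ distance as the $L^1$ distance between quantile functions, and then reduce the claim to a single application of Jensen's inequality on each of the $n$ intervals $I_i = [(i-1)/n, i/n]$.

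First, I would recall (and briefly justify, since it is standard) that for any two Borel probability measures $\mu, \nu$ on $\mathbb{R}$, one has
\[
W_1(\mu,\nu) \;=\; \int_0^1 |f_\mu(t) - f_\nu(t)|\,dt,
\]
where $f_\mu, f_\nu$ are the quantile functions (generalized inverse CDFs), which are exactly the nondecreasing functions on $[0,1]$ pushing the uniform measure forward to $\mu$ and $\nu$. This is the same object the algorithm calls $f_q$. Existence is immediate from the CDF: $f_\mu(t) = \inf\{x : F_\mu(x) \ge t\}$ works for any Borel probability measure, and in particular for a finite sum of point masses it is a right-continuous step function.

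Second, I would unpack the two quantile functions that appear in the lemma. Writing $p$ as $n$ equally weighted point masses at the sorted values $p_1 \le p_2 \le \cdots \le p_n$, the quantile function $f_p$ is constant equal to $p_i$ on $I_i$. By construction of the algorithm, $q_v$ is also a sum of $n$ equally weighted point masses, at $v_1,\ldots,v_n$, and I would verify that $v_1 \le v_2 \le \cdots \le v_n$ (this follows because $f_q$ is nondecreasing and the $v_i$ are averages of $f_q$ on successive intervals), so that $f_{q_v}$ is the step function equal to $v_i$ on $I_i$. Consequently
\[
W_1(p,q_v) \;=\; \sum_{i=1}^n \int_{I_i} |p_i - v_i|\,dt \;=\; \sum_{i=1}^n \frac{1}{n}|p_i - v_i|,
\qquad
W_1(p,q) \;=\; \sum_{i=1}^n \int_{I_i} |p_i - f_q(t)|\,dt.
\]

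Third, I would observe that the algorithm defines $v_i = \mathbb{E}[f_q(X) \mid X \in I_i]$, i.e.\ $v_i$ is the average of $f_q$ over $I_i$. Applying Jensen's inequality to the convex map $y \mapsto |p_i - y|$ yields
\[
|p_i - v_i| \;=\; \Bigl|p_i - n\!\int_{I_i} f_q(t)\,dt\Bigr| \;\le\; n\!\int_{I_i} |p_i - f_q(t)|\,dt,
\]
so $\frac{1}{n}|p_i - v_i| \le \int_{I_i} |p_i - f_q(t)|\,dt$. Summing this inequality over $i=1,\ldots,n$ gives $W_1(p,q_v) \le W_1(p,q)$, as desired.

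Honestly there is no real obstacle here; the only mild care needed is in setting up the quantile-function formula for $W_1$ for distributions that may share atoms, and in confirming that the output $v$ of \textsc{DiscretizeSpectrum} yields a quantile function that genuinely coincides with $f_p$ in its level-set structure (constant on each $I_i$). Once those bookkeeping points are in place, the proof is a single-line application of Jensen.
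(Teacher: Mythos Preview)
Your proof is correct and follows essentially the same approach as the paper: both express the optimal transport from $q$ to $p$ via the quantile function $f_q$, decompose into the $n$ intervals $I_i$, and bound $\frac{1}{n}|p_i - v_i|$ by the cost on the $i$th slice using Jensen/triangle inequality for $|\cdot|$. Your presentation is slightly more careful in invoking the quantile-function formula for $W_1$ and in checking that $v_1\le\cdots\le v_n$, but the argument is the same.
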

\begin{proof}
Let $p_1,\ldots,p_n$ with $p_i \le p_{i+1}$ denote the support of distribution $p$.  Observe that the earth moving scheme of minimal cost that yields distribution $p$ from distribution $q$ consists of moving the $1/n$ probability mass in distribution $q$ corresponding to the (scaled) conditional distribution $f_q(X)$ conditioned on $X \in [\frac{i-1}{n},\frac{i}{n}]$ to location $p_i$.  Let $q_i$ denote the $i$th such conditional distribution.  Since, $W_1(p,q_v) = \frac{1}{n} \sum_{i=1}^n |p_i - v_i|,$  it suffices to analyze $|p_i-v_i|$ independently for each $i$.   To conclude, note that the contribution of $q_i$ to the earthmover distance is simply 
\begin{eqnarray*}
\frac{1}{n} \sum_{x \in supp(q_i)} |x-p_i|\cdot q_i(x) &\ge& \frac{1}{n}\left|p_i - \sum_{x \in supp(q_i)} x \cdot q_i(x) \right| \\
&=& \frac{1}{n}|p_i-v_i|,
\end{eqnarray*}
 where for $x\in supp(q_i),$ we use the shorthand $q_i(x)$ to denote the amount of mass that distribution $q_i$ places on value $x$.
\end{proof}

\section{Approximating the spectrum of graph Laplacians}\label{sec:graphL}

In this section we describe how to leverage the results of Section~\ref{sec:spA}, namely how to accurately approximate the spectrum of a stochastic matrix, to recover the spectrum of a graph Laplacian.
Let $G=(V,E)$, $V=\{1,\dots,n\}$ be an undirected graph and let $A$ be 
its adjacency matrix. We assume that we have access to an oracle that 
on input a vertex $v$ can provide a uniformly distributed neighbor of $v$.

\begin{definition}\label{def:normL}
The \emph{normalized Laplacian} of a graph $G$ with adjacency matrix $A$
is defined as $L_G = I - D^{-1/2} A D^{-1/2}$, where $D$ is a diagonal matrix 
whose entries are the vertex degrees.
\end{definition}

Let $M=AD^{-1}$ be the transition matrix of a random walk on $G$, i.e. 
$M_{i,j} = \frac{1}{\deg(j)}$ whenever there is an edge between vertex $i$ and
$j$ and where $\deg(j)$ denotes the degree of vertex $j$.
Note that $M=D^{1/2} (I - L_G) D^{-1/2}$ and so $M$ is similar to the real
valued symmetric matrix $I-L_G$. Thus, $M$ is a stochastic matrix that can be written 
as $Q\Lambda Q^{-1}$ and the $i$-th largest eigenvalue $\lambda_i$ of $L_G$ corresponds to an $i$-th smallest 
eigenvalue $1-\lambda_i$ of $M$ (in particular, the eigenvalues are real).

Hence approximating the spectrum of $M$ will also give an approximation
of the spectrum of $L_G$, immediately yielding Theorem~\ref{theorem:mainL}.

\section{An Improved Algorithm for Bounded Degree Planar Graphs}

In this section we describe an improved algorithm for bounded degree planar graphs and, more generally, 
minor-closed bounded-degree graphs, establishing Theorem~\ref{sec:spA}. We need two main tools to obtain this result. The first 
one is a lemma that shows that the earth mover distance is at most twice the graph edit distance.

\begin{lemma}
Let $G=(V,E)$ and $H=(V,E')$ be two graphs. Then
$$
|V| \cdot  W_1(\lambda, \lambda') \le 2 G \Delta H , 
$$
where $G \Delta H$ denotes the number of edges that need to be changed to transform $G$ into an isomorphic
copy of $H$ and $\lambda$ and $\lambda'$ are the spectra of $G$ and $H$, respectively.
\end{lemma}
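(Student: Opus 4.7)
The plan is to reduce to a single-edge modification and then invoke a spectral stability inequality. Pick a sequence $G = G_0, G_1, \ldots, G_k$ with $k = |G \Delta H|$ such that consecutive graphs differ by exactly one edge and $G_k$ is isomorphic to $H$. Isomorphism preserves the spectrum and $W_1$ is a metric, so by the triangle inequality it suffices to prove $|V| \cdot W_1(\lambda_{G_{i-1}}, \lambda_{G_i}) \le 2$ for each step. For sorted spectra $|V| \cdot W_1(\lambda, \lambda') = \sum_i |\lambda_i - \lambda_i'|$, so the step bound is really an $\ell_1$ statement about sorted eigenvalues.

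For a single-edge modification with endpoints $u, v$, denote the two normalized Laplacians by $L$ and $L'$. Lidskii's (or Mirsky's) inequality for Hermitian matrices gives $\sum_i |\lambda_i^{\downarrow}(L) - \lambda_i^{\downarrow}(L')| \le \|L - L'\|_{*}$, where $\|\cdot\|_*$ is the nuclear (trace) norm; so the task reduces to showing $\|L - L'\|_* \le 2$. The perturbation $L - L'$ is supported entirely in the rows and columns indexed by $\{u, v\}$ (the diagonal is constantly $1$ and entries not involving $u$ or $v$ agree), hence has rank at most $4$. Moreover $\Tr(L) = \Tr(L') = |V|$, so $\Tr(L - L') = 0$, and therefore $\|L - L'\|_*$ equals twice the sum of the positive eigenvalues of $L - L'$.

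The main obstacle is sharpening the constant: the naive combination of rank $\le 4$ with the operator-norm bound $\|L - L'\|_{\text{op}} \le 2$ gives only $\|L - L'\|_* \le 8$, while we need $2$. The sharp bound should be obtained by splitting $L - L'$ into its positive and negative spectral parts on the $4$-dimensional $\{u,v\}$-support and explicitly showing that the positive part has trace at most $1$. This positive part comprises two kinds of contributions---the change in the $(u,v)$-entry due to the edge being flipped, and the telescoping adjustments of the entries $(u, w)$ and $(v, w)$ caused by the unit change in the degrees $d_u, d_v$---which must cancel appropriately. Verifying this trace bound is the technical crux; as a sanity check, the bound is already tight on the simplest example of $G = K_2$ versus $G' = $ empty graph on two vertices, where $\|L - L'\|_* = 2$.
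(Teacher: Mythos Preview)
Your reduction to a single edge and the appeal to Mirsky's inequality are valid, but the argument stops precisely at the hard step. You need $\|L-L'\|_*\le 2$ for a one-edge modification of the \emph{normalized} Laplacian, and you do not prove it; you write that ``verifying this trace bound is the technical crux'' and offer only a vague plan (split the rank-$\le 4$ perturbation into positive and negative spectral parts and hope the edge-flip term and the degree-adjustment terms ``cancel appropriately''). There is no evident mechanism for such cancellation: the corrections $-(d_u d_w)^{-1/2}+((d_u\pm 1)d_w)^{-1/2}$ run over \emph{every} neighbor $w$ of $u$ and of $v$, with signs and magnitudes depending on the whole local degree profile, and they do not align in any obvious way with the eigendecomposition of $L-L'$. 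Already for $C_4\to P_4$ the perturbation has full rank $4$ with $\|L-L'\|_*$ strictly between $1$ and $2$, so nothing collapses for free. Without this bound the argument does not establish the lemma; at best your outline yields $\|L-L'\|_*\le 8$, hence $|V|\cdot W_1\le 8\,G\Delta H$.

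For contrast, the paper never reduces to a single edge and never touches the nuclear norm. It argues globally via the Courant--Fischer variational principle: on the subspace $U\subset\mathbb{R}^V$ of vectors vanishing on every vertex incident to an edge of $E\triangle E'$, the two normalized Laplacian quadratic forms coincide, so min--max forces the eigenvalue counting functions $f_G,f_H$ to differ pointwise by at most $(\mathrm{codim}\,U)/|V|$. Integrating this uniform bound over $[0,2]$ gives the Wasserstein estimate in one stroke, with no matrix-perturbation machinery needed.
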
 

\begin{proof}
We first recall the variational characterization of eigenvalues for a symmetric $n\times n$ matrix $Q$:
$$\lambda_k(Q) = \min\{\lambda| \exists\mathrm{k-subspace}\; F\subset \R^n \;\mathrm{ s.t. } \;\forall x\in F\;x^tQx \leq \lambda x^tx\}$$
Let $U\subset \R^V$ be the subspace of functions that vanish on the vertices incident to at least an edge that is in one of the graphs $G$ and $H$ only. By assumption, the codimension of $U$ is at most $G\Delta H$. Now, it is easy to see that the (normalized) Laplacian quadratic forms $L_G$ and $L_H$ coincide on $U$. For $0\leq \lambda\leq 2$, let $f_G(\lambda)$ (resp. $f_H(\lambda)$) be the fraction of eigenvalues of $G$ (resp. $H$) that are below $\lambda$. From the variational principle, for a given $\lambda$, there is a $f_G(\lambda)|V|$-subspace $W\subset R^V$ such that $\forall x\in W\, x^tL_G x\leq \lambda x^tx$. The subspace $W\cap U$ is at least $f_G(\lambda)|V|-G\Delta H$ dimensional and because the two quadratic forms coincide on it, it witnesses that $f_H(\lambda)|V|\geq f_G(\lambda)|V|-G\Delta H$ using the variational principle. By symmetry, $|f_G-f_H|\leq G\Delta H /|V|$. 

Since $f_G$ and $f_H$ coincide outside $[0,2]$, we see that $\int|f_G-f_H| \leq 2G\Delta H/|V|$. The latter integral is the area between the graphs of $f_H$ and $f_G$. Now, switching axes, these graphs become the graphs of the inverse cumulative distribution functions of the spectral measures of $G$ and $H$. Since the earth mover distance is the $L_1$ distance between inverse cumulative distribution functions, the result follows.
\end{proof}

The second tool is an algorithmic gadget called a ``planar partitioning oracle''. It is well known that by
applying the planar separator theorem \cite{LT79} multiple times one can partition a planar graph
with maximum degree $d$ into connected components of size $O(d/\epsilon^2)$ by removing 
$\epsilon n$ edges from the graph. A planar partitioning oracle provides local access to such a partition.

\begin{definition}[\cite{HKNO09} ] 
We say that $\mathcal{O}$ is an $(\epsilon,k)$-partitioning oracle for a class $C$ of graphs 
if given query access to a graph $G = (V,E)$ in the adjacency-list model, it provides query access to a partition $P$ of $V$. 
For a query about $v\in V$ , $\mathcal O$ returns $P[v]$. The partition has the following properties: 
\begin{itemize}
\item
$P$ is a function of the graph and random bits of the oracle. In particular, it does not depend on the order of queries to 
$\mathcal O$. 
\item For every $v \in V,$ $|P[v]|\leɠk$ and $P[v]$ induces a connected graph in $G$. 
\item
If $G$ belongs to $C$, then $|\{(v,w) \in E : P[v] \neq P[w]\}| \le ʜepsilon|V|$ with probability $9/10$.
\end{itemize}
\end{definition}

We will leverage a partitioning oracle by Levi and Ron:

\begin{theorem} [\cite{LR15}] 
For any fixed graph $H$ there exists an $(\epsilon,O(d^2/\epsilon^2))$-partition-oracle for $H$-minor free graphs that makes 
$(d/\epsilon)^{O(log(1/\epsilon))}$ queries to the graph for each query to the oracle. The total time complexity of a sequence 
of $q$ queries to the oracle is $q \log q \cdot (d/\epsilon)^{O(\log(1/\epsilon))}$. 
\end{theorem}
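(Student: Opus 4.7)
The plan is to realize the oracle by locally simulating a carefully designed global partitioning procedure tailored to $H$-minor-free graphs. First, I would establish the existence of a good global partition: every $H$-minor-free graph of maximum degree $d$ admits a decomposition into connected pieces of size $O(d^2/\epsilon^2)$ that cuts at most $\epsilon|V|$ edges. The natural route is to apply the Alon--Seymour--Thomas separator theorem recursively---any $H$-minor-free subgraph on $N$ vertices has a balanced separator of size $O_H(\sqrt{N})$---and to recurse until no piece exceeds size $\Theta(d^2/\epsilon^2)$. A telescoping accounting of separator sizes, with the recursion parameters tuned correctly, bounds the total cut by $O(\epsilon|V|)$.

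Second, I would follow the Hassidim--Kelner--Nguyen--Onak framework to convert this global procedure into an oracle. Each vertex is assigned a uniformly random priority drawn from shared random bits, and vertices are processed in priority order: for each unassigned vertex $u$, attempt to carve out a ``good'' connected piece containing $u$ and commit to it if successful. To answer a query at $v$, the oracle explores only the \emph{influence region} of $v$---the set of vertices whose processing under the global simulation could alter $v$'s eventual assignment---and then reports the piece containing $v$. Correctness of the oracle (consistency across queries, independence from the order) is inherited from this framework.

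The technical heart of the argument, and where I expect the main obstacle to lie, is bounding the influence region by $(d/\epsilon)^{O(\log(1/\epsilon))}$. For generic hyperfinite graphs the analogous bound can blow up exponentially in $1/\epsilon$. To obtain a quasi-polynomial bound, I would design the local carving subroutine around a separator-based recursion of depth $O(\log(1/\epsilon))$ with branching factor polynomial in $d/\epsilon$, so that a vertex's assignment depends only on vertices reachable along this shallow tree. The key technical step is showing that randomness, separator choice, and the commit rule combine to yield, with probability at least $9/10$, a partition cutting at most $\epsilon|V|$ edges while keeping the local dependency tree of any single query shallow; this is essentially a locality-vs.-quality trade-off argument that leans heavily on $H$-minor-freeness to keep separators small at every level of the recursion.

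Finally, for the amortized time bound over $q$ queries, I would cache priorities and previously assigned pieces in a shared data structure such as a hashed union--find; each query then costs the influence-region exploration plus an $O(\log q)$ bookkeeping overhead, giving a total time of $q\log q\cdot(d/\epsilon)^{O(\log(1/\epsilon))}$.
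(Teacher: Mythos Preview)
The paper does not prove this theorem; it is quoted verbatim from Levi and Ron~\cite{LR15} and used as a black box. There is therefore no ``paper's own proof'' to compare your proposal against.

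That said, your sketch is a reasonable high-level summary of how the Levi--Ron result is actually obtained: recursive use of the Alon--Seymour--Thomas separator theorem to depth $O(\log(1/\epsilon))$, embedded in the Hassidim--Kelner--Nguyen--Onak local-simulation framework, with the quasi-polynomial query bound arising from the bounded recursion depth. The one caveat is that your description of the ``influence region'' and the commit rule is vague at exactly the point where the real work lies; Levi and Ron's argument is considerably more delicate than a generic HKNO instantiation, and your phrase ``locality-vs.-quality trade-off argument'' hides the genuinely nontrivial analysis. If you intend this as a pointer to the literature rather than a self-contained proof, it is fine; if you intend it as an actual proof, the middle paragraph would need substantial expansion before it could be checked.
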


The partitioning oracle provides us access to a partition of a minor-closed graph $G=(V,E)$ into small connected components. This
partition is obtained by removing at most $\epsilon n$ edges. Let us call the graph that consists of these connected components 
$H(V,E')$. By our first lemma the spectra of $G$ and $H$ have earth mover distance at most $\epsilon$. This means that if we can
approximate the spectrum of a graph with small connected components, then we can also estimate the spectrum of a minor-closed
bounded degree graph using the partitioning oracle from above. 

We now provide a simple algorithm that samples eigenvalues from the spectrum of a graph with small connected components.

\begin{center}
\myalg{SmallCCSpectrum($H=(V,E)$)}{
\\ \textbf{Input:} Graph $H=(V,E)$ with small connected components.\\
\textbf{Output:} A random eigenvalue of the normalized Laplacian of $H.$
\begin{enumerate}
\item Sample a vertex $v\in V$ uniformly at random \\
\item Compute the connected component $C(v)$ of $v$\\
\item Return a random eigenvalue of the normalized Laplacian of $C(v)$ \\
\end{enumerate}}
\end{center}

\begin{lemma}
Algorithm SmallCCSpectrum samples a random eigenvalue from $H$. If all connected components are of 
size at most $t$ then the running time of the algorithm is $O(t^3)$.
\end{lemma}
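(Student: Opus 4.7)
The plan is to exploit the fact that the normalized Laplacian of a graph with several connected components is block diagonal, with one block per component. More precisely, writing $H$ as the disjoint union of its connected components $C_1,\dots,C_k$ with sizes $n_1,\dots,n_k$ summing to $n=|V|$, the spectrum of $L_H$ is the multiset union of the spectra of $L_{C_1},\dots,L_{C_k}$. Each component $C_i$ thus contributes exactly $n_i$ eigenvalues to the full list of $n$ eigenvalues.

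Next, I would analyze the output distribution of the algorithm by conditioning on the component into which the sampled vertex falls. Since $v$ is uniform over $V$, the probability that $v\in C_i$ is $n_i/n$. Conditioned on that event, the algorithm returns a uniformly chosen eigenvalue of $L_{C_i}$, so each particular eigenvalue in that block is output with conditional probability $1/n_i$. Multiplying, every eigenvalue in the full multiset union is output with probability $(n_i/n)\cdot(1/n_i)=1/n$, which is exactly the uniform distribution over the spectrum of $L_H$ (with multiplicities). This shows the algorithm samples a uniformly random eigenvalue of $H$.

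For the running time, I would note that Step~2 can be implemented by a BFS starting from $v$: since $C(v)$ has at most $t$ vertices and the graph is assumed to have bounded degree, this takes $O(t)$ time (or $O(t^2)$ in the worst case if one is agnostic to the degree bound, which is dominated by the next step anyway). Step~3 requires constructing the normalized Laplacian of $C(v)$, a matrix of size at most $t\times t$, and computing a full eigendecomposition; using any standard symmetric eigensolver this costs $O(t^3)$ time, and sampling one eigenvalue from the resulting list is then trivial. The dominant term is therefore $O(t^3)$, as claimed.

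There is no substantive obstacle: the only thing to be careful about is bookkeeping multiplicities correctly (so that the $1/n$ calculation really gives the uniform distribution on the multiset, not the set, of eigenvalues), and to note that block diagonality passes the correct eigenvalues, with correct multiplicities, from each component into the spectrum of $L_H$. Both are standard facts.
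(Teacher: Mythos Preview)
Your proposal is correct and follows essentially the same approach as the paper: both argue that the spectrum of $H$ is the (multiset) union of the spectra of its connected components, and both compute the probability of outputting a fixed eigenvalue as $(n_i/n)\cdot(1/n_i)=1/n$. In fact you go slightly further than the paper, since you actually justify the $O(t^3)$ running time via BFS plus a cubic-time eigensolver, whereas the paper's own proof addresses only the sampling correctness and leaves the running time implicit.
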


\begin{proof}
First we observe that the spectrum of $H$ is the union of the spectrum of its connected components. Indeed, 
given an eigenvalue with corresponding eigenvector of a connected component $C(v)$ of $H$ we observe that 
extending the eigenvector with $0$ will yield an eigenvector of $H$ with the same eigenvalue.

Next we observe that the algorithm returns a uniformly distributed eigenvalue of $H$. Let us fix an eigenvalue
$\lambda_i$ belonging to connected component $C$. The probability to sample $\lambda_i$ is the probability to
sample a vertex from the connected component (which is $|V(C)|/|V|$) times the probability that $\lambda_i$ 
is sampled from the $|V(C)|$ eigenvalues of the connected component, which is $1/|V(C)|$. Hence the probability to sample
$\lambda_i$ is $1/|V|$.
\end{proof}

\medskip
\noindent \textbf{Theorem~\ref{theorem:mainP}.}\emph{
Let $\mathcal G$ be a family of graphs of maximum degree at  most $d$ that does not contain a forbidden minor $H$. Then one can approximate the
spectrum of $G$ in earth mover distance upto an additive error of $\epsilon$ in time $\left(\frac{d}{\epsilon}\right)^{O(\log (1/\epsilon))}$.}

\begin{proof}
The approximation guarantee follows from the relation between edit distance and earth mover distance and when we estimate the spectrum
using polynomially (in $1/\epsilon$) many calls to algorithm SmallCCSpectrum. The running time then follows from the running time of 
the planar partitioning oracle (where the additional factors in $1/\epsilon$ are absorbed by the $O$-notation in the exponent). 
\end{proof}

\section{Testing Spectral Properties}

In this section we study the implications of our result on the area of property
testing in the bounded degree graph model. We start by giving some basic definitions.
We will consider the bounded-degree graph model introduced by Goldreich and Ron \cite{GR02}. 
In this model the degree of a graph is bounded by $d$, which we typically think of being
a constant although we will parametrize our analysis in terms of $d$. A graph 
with maximum degree bounded by $d$ is also called $d$-degree bounded graph.
 We assume that the input graph $G=(V,E)$ has vertex set $V=\{1,\dots,n\}$ and $n$ is given to the 
algorithm. In the bounded degree graph model we can query for the $i$-th neighbor
adjacent to vertex $j$. If no such vertex exists, the answer to the query is a special symbol 
indicating this.

The goal of property testing is to study a relaxed decision problem for graph properties, where
a graph property is defined as follows:

\begin{definition}
A graph property $\Pi$ is a set of graphs that is closed under isomorphism.
For a graph property $\Pi$ we use $\Pi_n$ to denote the subset of graphs in $\Pi$
that have exactly $n$ vertices.
\end{definition}

In this relaxed decision problem we are studying how to approximately decide whether
an input graph has a given graph property $\Pi$ or is far away from $P$. A property testing algorithm
for property P (also called property tester) is given
access to an input graph in the way described above and it has to accept with probability
at least $2/3$ every input graph that has property $\Pi$ and has to reject with probability
at least $2/3$ every input graph that is $\epsilon$-far from $\Pi$ according to the
following definition.

\begin{definition}
A $d$-bounded degree graph $G=(V,E)$  is $\epsilon$-far from a property $\Pi$, 
if one has to insert/delete  more than $\epsilon d |V|$ edges in $G$ to obtain a 
$d$-bounded degree graph that has property $\Pi$. 
\end{definition}

One of the main questions studied in the area of property testing in bounded degree graphs
is to identify the properties that area testable in constant time, for example, 
according to the following definition.

\begin{definition}
A graph property $\Pi$ is \emph{testable} in the bounded degree graph model with degree bound $d$, 
if there exists a function $q(\epsilon,d)$ such that for every $\epsilon>0$, $d,n \in \mathbb N$ there exists an 
algorithm $A_{\epsilon,d,n}$ such that 
\begin{itemize}
\item 
$A_{\epsilon,d,n}$ makes at most $q(\epsilon,d)$ queries to the graph,
\item
$A_{\epsilon,d,n}$ accepts with probability at least $2/3$ every $d$-bounded 
degree graph $G\in \Pi_n$ 
\item
$A_{\epsilon,d,n}$ rejects with probability at least $2/3$ every $d$-bounded 
degree graph that is $\epsilon$-far from $\Pi_n$.
\end{itemize}
\end{definition}

It is known that some fundamental graph properties like connectivity, $c$-vertex connectivity and $c$-edge connectivity
are testable \cite{GR02}. Also, properties like subgraph-freeness or some properties that depend on the distribution of vertex degrees are (trivially) testable. Furthermore, it is known that all minor-closed properties \cite{BSS08} and, more generally, 
all hyperfinite properties are testable \cite{NS13}, where a property is hyperfinite, if all graphs that have the property
can be decomposed into small components by removing $\epsilon dn$ edges from the graph. Thus, hyperfinite graphs can
be thought of as the opposite of expander graphs, for which small cuts do not exist. Not much is known about (constant time) testable
properties of expander graphs or properties that contain expander graphs except for the properties mentioned above.
Our result indicates that some properties that depend on the spectrum of the graph may be testable and in this section
we initiate the study of such properties. We then prove that a certain class of spectral properties is testable for any class of high girth graphs, i.e. when the input graph is promised to have high girth. 
In the following we will  view the spectrum as an $n$-dimensional vector 
$\Lambda = (\lambda_1,\dots, \lambda_n)$. We will also sometimes refer to the $l_1$-distance between two spectra $\Lambda_1$, $\Lambda_2$ (viewing them as sorted vectors) 
which is equals the earth mover (or Wasserstein) distance of the 
corresponding spectral measures, scaled by a factor of $n$, i.e. $\frac{1}{n}\|\Lambda_1-\Lambda_2\|_1 = W_1(\Lambda_1,\Lambda_2)$.
We start with a definition of spectral graph properties.

\begin{definition}
A graph property $\Pi$ of $d$-bounded degree graphs is called \emph{spectral}, 
if for every $n \in \mathbb N$ 
there exists a set $S_n \subseteq \mathfrak X_n$ such that $\Pi_n$ is the set of all
$d$-bounded degree graphs on $n$ vertices whose spectrum is in $S_n$. Here, 
$\mathfrak X_n \subseteq[0,2]^n$ is the set of spectra that are realized by $d$-bounded degree graphs with $n$ vertices.
\end{definition}

We would like to use our algorithm from the previous section as a property tester. 
The rough idea is that we would like to accept all graphs whose spectrum is close 
(in $l_1$-distance) to the set $S_n$. The technical difficulty is to relate the 
edit distance between graphs to the distance between their spectra.

In order to prove that all spectral graph properties are testable, it would suffice to 
prove a statement similar to the following: If $G$ is $\epsilon$-far from $\Pi_n$ then the 
$l_1$-distance of the spectrum of $G$ to $S_n$ is at least $\eta(\epsilon) n$ for some $\eta(\epsilon)>0$.
However, we do not believe that such a general statement is true. Therefore, we restrict our
attention to the following class of properties:
\begin{definition}
Let $\mathfrak G$ be a class of graphs that is closed under isomorphism.
A graph property $\Pi$ of $d$-bounded degree graphs is called \emph{$\delta$-robustly spectral}, if for every $n \in \mathbb N$ 
there exists a set $S_n \subseteq \mathfrak X_n$ such that $\Pi_n$ is the set of all $n$ vertex graphs in $\mathfrak G$
whose spectrum has $l_1$-distance at most $\delta n$ to $S_n$. Here, $\mathfrak X_n$ is the set of spectra that are realized by $d$-bounded degree graphs in $\mathfrak G$ with $n$ vertices.
\end{definition}

In the following we will consider $\mathfrak G$ to be a class of high girth graphs according to the following
definition.

\begin{definition}
 A class of graphs $\mathfrak G$ has \emph{high girth}, if there exists $f(n) = \omega(1)$ such that
every $n$ vertex graph in $\mathfrak G$ has girth at least $f(n)$.
\end{definition}

The main result in this section is the following theorem.

\begin{theorem}\label{thm:spectralgraphprop}
Every $\delta$-robust property is testable in the bounded degree graph model when the input is restricted to a class of high girth graphs.
\end{theorem}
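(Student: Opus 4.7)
I would build the tester as follows: run the spectrum approximation algorithm of Theorem~\ref{theorem:mainL} with accuracy parameter $\eps_0$, chosen much smaller than $\epsilon$ and $\delta$, to obtain $\tilde\lambda$ with $\|\tilde\lambda-\lambda(G)\|_1\le\eps_0 n$ with probability $2/3$, and accept iff the $\ell_1$-distance from $\tilde\lambda$ to the core $S_n$ is at most $(\delta+\eps_0)n$. Completeness is immediate from the triangle inequality: if $G\in\Pi_n$, then some $\sigma\in S_n$ satisfies $\|\lambda(G)-\sigma\|_1\le\delta n$ by $\delta$-robustness, so $\|\tilde\lambda-\sigma\|_1\le(\delta+\eps_0)n$ whenever spectrum estimation succeeds. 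The substance of the theorem is therefore soundness: I must show that whenever the tester accepts $G$, there exists $G^*\in\Pi_n$ at graph edit distance at most $\epsilon dn$ from $G$. Acceptance yields $\|\lambda(G)-\sigma\|_1\le(\delta+2\eps_0)n$ for some $\sigma\in S_n$, so the task is to close the remaining slack of $2\eps_0 n$ while spending at most $\epsilon dn$ edge edits.

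To construct $G^*$, I would invoke the local partitioning result of Fichtenberger, Peng and Sohler~\cite{FPS15} on the high-girth graph $G$: it produces a set $U$ of $\Theta(\epsilon n)$ vertices, separated from $V\setminus U$ by a small cut, such that the distribution of the rooted $r$-neighborhood at a uniformly random vertex is changed by at most $\eps_1$ in going from $G$ to $G[V\setminus U]$. The high-girth hypothesis lets me choose $r$ strictly larger than the number $\ell=O(1/\eps_0)$ of spectral moments used by the approximation algorithm, and because in a graph of girth $>\ell$ the return probability of an $\ell$-step walk from a vertex $v$ is a function of the rooted $\lceil\ell/2\rceil$-neighborhood of $v$ alone, the FPS15 preservation of local neighborhoods forces the first $\ell$ spectral moments of $G$ and of $G[V\setminus U]$ to agree within $O(\eps_1)$. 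Feeding this into Proposition~\ref{prop1} together with Lemma~\ref{lem:2} yields $W_1(\mu_G,\mu_{G[V\setminus U]})\le\eps'$ for arbitrarily small $\eps'$, where I write $\mu_{(\cdot)}$ for the normalized spectral measure. I then set $G^*=G[V\setminus U]\sqcup H$, where $H$ is a graph on $|U|$ vertices whose normalized spectral measure approximates $\mu_\sigma$; such an $H$ can be built from local tree-like pieces of a $d$-bounded degree graph $G_\sigma\in\mathfrak G$ realizing $\sigma$, again using that in the high-girth setting low-order spectral moments are determined by the local neighborhood distribution. Because $G^*$ is a disjoint union, $\mu_{G^*}=(1-\alpha)\mu_{G[V\setminus U]}+\alpha\mu_H$ with $\alpha=|U|/n=\Theta(\epsilon)$, and since both constituents lie within $\delta+O(\eps_0+\eps')$ of $\mu_\sigma$ in $W_1$, the convex combination also lies within $\delta$ of $\mu_\sigma$ once $\eps_0$ and $\eps'$ are tuned to be $O(\epsilon\delta)$, placing $G^*\in\Pi_n$. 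The edit distance is dominated by the $U$-incident edges of $G$ plus the edges of $H$ plus the small cut guaranteed by FPS15, all of order $d|U|$, which fits into the $\epsilon dn$ budget once the constants in $|U|$ are chosen correctly.

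The principal obstacle I foresee is the step that converts the FPS15 local-distribution preservation into a genuine Wasserstein bound on the spectra. This relies on the identity that the $\ell$-th spectral moment of the random walk matrix equals the expected $\ell$-step return probability from a uniformly random start vertex, on the fact that in a graph of girth $>\ell$ this return probability depends only on a rooted neighborhood of radius $\lceil\ell/2\rceil$, and on the moment-inversion machinery of Section~\ref{sec:a1} to pass from moment agreement to $W_1$ agreement; the quantitative tuning of $\ell$, $\eps_0$, $\eps_1$ and $\eps'$ needs to be handled carefully so that the factor $g(\ell)=O(3^\ell)$ in Proposition~\ref{prop1} does not ruin the bound. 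A subsidiary difficulty is the explicit construction of the filler graph $H$ with a prescribed normalized spectral measure while remaining in the high-girth class $\mathfrak G$; I expect this to be resolved by taking $H$ to be a disjoint union of local pieces sampled from (a graph realizing) $\sigma$, once again exploiting the local-to-global spectral correspondence that the high-girth assumption makes available.
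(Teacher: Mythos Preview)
Your plan mirrors the paper's proof almost exactly: the same tester, the same use of \cite{FPS15} to carve out a set $U$ of size $\Theta(\epsilon n)$ behind a small cut while preserving local neighbourhood frequencies, the same replacement of $G[U]$ by a filler graph whose spectral measure approximates $\mu_\sigma$, and the same convex-combination accounting at the end. Two small differences: the paper handles ``$\mu_{G[V\setminus U]}\approx\mu_G$'' not by bounding moments directly but by noting that the \emph{output distribution} of \textsc{ApproximateSpectrum} is a function of $\freq_k(G)$ alone, so close $k$-disc frequencies force a common approximate spectrum (Lemma~\ref{lemma:CloseDiscs}); and the paper's construction of the filler graph is more concrete than your ``local tree-like pieces''---it takes $\lfloor |U|/|V(H')|\rfloor$ disjoint copies of a fixed small graph $H'$ whose spectrum is $\epsilon^*/2$-close to $\mu_\sigma$ (such an $H'$ exists because the space of spectral measures on $[0,2]$ has a finite $\epsilon^*$-net), padded with isolated vertices. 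Also, the fact that length-$\ell$ return probabilities are determined by a bounded-radius neighbourhood holds in \emph{any} graph; the high-girth hypothesis is used only to make the \cite{FPS15} rewiring lemma applicable.

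There is, however, a real slip in your convex-combination step. From ``both constituents lie within $\delta+O(\eps_0+\eps')$ of $\mu_\sigma$'' you can only conclude that their mixture lies within $\delta+O(\eps_0+\eps')$ of $\mu_\sigma$, which is strictly larger than $\delta$ regardless of how the parameters are tuned. The mechanism that actually closes the gap is asymmetric: $\mu_H$ is not merely within $\delta+O(\cdot)$ of $\mu_\sigma$ but within some tiny $\eps''$ of it, while $\mu_{G[V\setminus U]}$ is within $\delta+2\eps_0+\eps'$. Convexity of $W_1$ then gives
\[
W_1(\mu_{G^*},\mu_\sigma)\;\le\;(1-\alpha)(\delta+2\eps_0+\eps')+\alpha\,\eps''\;=\;\delta-\alpha\delta+O(\eps_0+\eps'+\eps''),
\]
and it is the $-\alpha\delta$ term, with $\alpha=|U|/n=\Theta(\epsilon)$, that supplies the budget to absorb the $O(\eps_0+\eps'+\eps'')$ slack once these are all chosen $O(\epsilon\delta)$. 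The paper carries out exactly this arithmetic with $\epsilon^*=\epsilon\delta/16$; your write-up should make the asymmetry between the two pieces explicit.
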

\begin{proof}[Proof of Theorem~\ref{thm:spectralgraphprop}]
Let $d>0$ and let $\mathfrak G$ be a class of high girth graphs with maximum degree bounded by $d$.
Let $\delta >0$ be given and let $\Pi$ be a $\delta$-robust property for the class $\mathfrak G$ with the sets
$S_n$ be as in the definition above. 
We need to show that for every $\epsilon>0$ and $n \ge 0$ there is an an algorithm $A_{\epsilon,d,n}$ that accepts with probability at least $2/3$ 
every $n$ vertex graph from $\Pi$ and that rejects with probability at least $2/3$ every $n$-vertex graph from $\mathfrak G$ that is $\epsilon$-far from $\Pi$. 

Thus let us fix an arbitrary $1\ge \epsilon >0$ and $n \ge 0$.
We will assume that $n \ge N_0=  N_0(\delta,\epsilon, d)$ for sufficiently large $N_0$. We will also need to define $\epsilon^* = \epsilon^*(\delta,\epsilon) = \epsilon \delta /16$. The algorithm will be as follows:

\begin{tabbing}
{\sc TestRobustlySpectral} $(G)$\\
\hspace{0.5cm} \= {\bf if} $n < N_0$ {\bf then} query all edges of $G$ and accept, iff $G\in \Pi$\\
\> {\bf else} \= Let $\Lambda$ be an approximation of the spectrum of $G$ with error at most $\epsilon^*$\\
\>\> {\bf if} $\inf_{q\in S_n} ||\Lambda-q||_1 \le \delta + \epsilon^*$ {\bf then} accept\\
\>\> {\bf else} reject\\
\end{tabbing}

We will first argue that the algorithm always accepts, if $G \in \Pi$. Indeed, if $n<N_0$ 
we accept, iff $G$ is in $\Pi$. If $n \ge N_0$ and the output $\Lambda$ of our spectrum approximation algorithm
is a approximation of the true spectrum $\Lambda^*$ of $G$ with additive error at most $\epsilon^* n$ (which happens with probability at least $2/3$), then we know that 
$\inf_{q\in S_n} \|\Lambda^*-q\|_1 \le \delta$ by the definition of $\delta$-robust and
$\|\Lambda - \Lambda^*\| \le \epsilon^*$ by the properties of the approximation
algorithm. By the triangle inequality we get 
$$\min_{q\in S_n} \|\Lambda-q\|_1 \le \min_{q\in S_n} \|\Lambda^*-q\|_1 + \|\Lambda^*-\Lambda\|_1
\le  \delta + \epsilon^* .$$
Hence, the algorithm accepts $G \in \Pi$ with probability at least $2/3$.

It remains to prove that any graph $G$ that is $\epsilon$-far from $\Pi$ is rejected with probability at least $2/3$. 
We first observe that every graph whose spectrum has distance more than $\delta + 2\epsilon^*$ to $S_n$
will be rejected with probability at least $2/3$. We prove that all graphs whose spectrum has distance at
most $\delta+ 2\epsilon^*$ to $S_n$ are indeed $\epsilon$-close to $\Pi$.
This is done in the following lemma.

\begin{lemma}
\label{lemma:far}
Let $0 < \epsilon,\delta,1$ and let $\epsilon^* \le \epsilon \delta/16$.
Let $G\in \mathfrak G$ be a high girth $d$-bounded degree graph whose spectrum has 
distance at most $\delta + 2 \epsilon^*$ to $S_n$.
Then we can modify at most $\epsilon d n$ edges of $G$ to obtain a graph $G^*$ whose spectrum
has distance at most $\delta$ to $S_n$.
\end{lemma}

\begin{proof}
Let $G$ be as in the lemma and let $\Lambda$ be the spectrum of $G$. 
The proof consists of two steps. First we show that we can modify $\epsilon d n/2 $ 
edges of $G$ to obtain a graph $G'$ that has a small cut between a set $V_1$ of size 
$\epsilon n/4$ and the rest of the graph and that has the same frequencies of 
local neighborhoods as $G$. Furthermore, the frequencies of local neighborhoods
of the vertices in $V_1$, and in the complement $V_2$, respectively, is also approximately the same
as in $G$.

Then we remove all edges incident to $V_1$ to obtain a graph $G'$ and we define $H_2 = G'[V_2]$.
Since the cut between $V_1$ and $V_2$ is small, this does not change too many local
neighborhoods in $V_2$ and the frequencies of local neighborhoods are still an approximation
of the frequencies in $G$. Since the output distribution of our algorithm {\sc ApproximateSpectrum}
is also fully determined by the frequencies of local neighborhoods, this also implies
that the spectrum of $H_2$ is an approximation of the spectrum of $G$.

We then replace $G'[V_1]$ by a graph $H_1$ on vertex set $V_1$ that has approximately 
the spectrum $\arg \min_{q\in S_n} \|\Lambda - q\|_1$. Let $H \in \Pi_n$ denote an $n$-vertex graph with
spectrum $\arg \min_{q\in S_n} \|\Lambda - q\|_1$.
The existence of such a
graph $H_1$ is proven below. Finally, we argue that the new graph has distance at 
most $\delta$ to $S_n$. 

We start with our first modification of turning $G$ into $G'$. This is done using the following 
lemma from \cite{FPS15} (here $V_1 \times V_2$ refers to the set of undirected pairs). We need the
following notation. A $k$-disc $\disc_k(G,v)$ is the subgraph that is induced by all vertices
of distance at most $k$ to $v$ and that is rooted at $v$. We say that two $k$-discs $\Gamma$ and $\Gamma'$ are isomorphic
if there is a graph isomorphism between them that maps the root of $\Gamma$ to the root of $\Gamma'$.
We write $\Gamma \simeq \Gamma'$ in that case. We denote the number of isomorphism classes of
$k$-discs of $d$-bounded degree graphs as $L=L(d,k)$ and denote $\mathcal T_k = \{T_1,\dots,T_L\}$
to be a corresponding set of graphs, i.e. the graphs are pairwise non-isomorphic.
We use $\freq_k(G)$ to be an $L$-dimensional vector such that the $i$-ith entry denotes the
fraction of $k$-discs in $G$ that are isomorphic to $T_i$. This vector describes the distribution
of local neighborhoods in $G$. We write $\freq_k(U|G)$ to denote an $L$-dimensional vector such that
its $i$-th entry is the fraction of the $k$-discs rooted at the vertices in $U$ that are isomorphic 
to $T_i$.

The lemma from \cite{FPS15} quantifies
the observation that in a graph with girth at least $2k+2$ with four vertices $a,b,c,d$ such that
$(a,b), (c,d) \in E$ and such that the $k$-disc type of $a$ equals the $k$-disc type of $c$ and
the $k$-disc type of $b$ equals that of $d$ we can replace edges $(a,b), (c,d)$ by $(a,d)$ and $(b,d)$ 
without changing the $k$-disc types of any vertex provided that the distance from $a$ to $c$ and
$b$ to $d$ is sufficiently large. 

\begin{lemma}
\label{Lemma:Partition}
Let $G = (V,E)$ be a $d$-bounded graph with girth$(G) \ge 2k + 2, k \in \mathbb N, \eta \in [0, 1]$ 
and let $V_1 \cup V_2 = V$, $V_1\cap V_2 = \emptyset$, be a partitioning of $V$ such that 
$|\freq_k(V_1 | G)_{\Gamma} - \freq_k(V_2 | G)_{\Gamma} | \le \eta$
for all $k$-discs $\Gamma \in \mathcal T_k$. Then either there exists a graph $H = (V, F)$ such that 
\begin{itemize}
\item[(1)]
girth$(H) \ge 2k + 2$ 
\item[(2)]
$|F \ (V_1 \times V_2)| \le |E \ (V_1 \times V_2)| - 2$ 
\item[(3)]
$ \disc_k(H,w) \simeq \disc_k(G,w) \forall w \in V$
\end{itemize}
or the cut between $V_1$ and $V_2$ is small:
$$
e(V_1, V_2)\le 6d^{2k+2}L + \eta Ld \cdot min(|V_1|, |V_2|) .
$$
\end{lemma}

Now let $k=k(\epsilon^*,d)$ be the length of the random walks performed by {\sc ApproximateSpectrum} 
on input parameter $\epsilon^*$ and let $s=s(\epsilon^*,d)$ be the number of vertices sampled
uniformly at random by the algorithm. Since the family of graphs we consider has high girth, 
we know that for sufficiently large $n$ all graphs have girth at least $2k+2$.
Since the random walks performed by {\sc ApproximateSpectrum} are of length at most $k$, the output distribution of algorithm {\sc ApproximateSpectrum} is fully determined by the 
distribution of $k$-discs in the input graph $G$, i.e. $\freq_k(G)$. 
We partition $V$ into 
sets $V_1$ and $V_2$ such that $\|\freq_k(V_2|G) - \freq_k(G)\|_1 \le \frac{1}{20s}$.
Clearly, such a partition exists for sufficiently large $n$.
Then we apply Lemma \ref{Lemma:Partition} repeatedly until we obtain a small cut. 
Since $|V_1| = \epsilon n/4$ and since in each iteration we do $4$ edge modifications
to decrease the cut size by $2$, we modify at most $\epsilon d n/2$ edges in this way.
We end up with a cut that satisfies the small cut condition of the lemma.
Now we observe that removing an edge can change at most $d^{k+1}$ $k$-discs.
We observe that for sufficiently large $n$ we get
$$
d^{k+1} \cdot e(V_1,V_2) \le \frac{n}{80s} 
.
$$
Thus, we can remove all edges incident to $V_1$ to obtain a graph $G'$ that satisfies
$$
\|\freq_k(V_2|G') - \freq_k(G)\|_1 \le \|\freq_k(V_2|G') - \freq_k(V_2| G)\|_1 + \|\freq_k(V_2|G) - \freq_k(G)\|_1 
\le 1/(10s).
$$
We now apply the lemma below on $G$ and $H_2 = G'[V_2]$ to obtain that
$W_1(\Lambda_G,\Lambda_{H_2}) \le 2 \epsilon^*$.

\begin{lemma}
\label{lemma:CloseDiscs}
Let $G, H$ be two $d$-bounded degree graphs. 
Let $\epsilon \in[0,1]$, $d\ge 1$. Let $s=s(\epsilon^*,d)$ be the number of
vertices sampled uniformly at random by algorithm {\sc ApproximateSpectrum} with input parameter $\epsilon^*$.
Let $k=k(\epsilon^*,d)$ be the length of the random walks performed by the algorithm.
If $\|\freq_k(G)-\freq_k(H)\|_1 \le 10/s$ then $W_1(\Lambda_G,\Lambda_H) \le 2\epsilon^*$.
\end{lemma}

\begin{proof}
We consider algorithm {\sc ApproximateSpectrum} with input parameter $\epsilon^*$ on input $G$ and
$H$ respectively. We observe that the output distribution of the algorithm is fully determined $\freq_k(G)$
and $\freq_k(H)$, respectively. Since our algorithm samples $s$ vertices uniformly at random this implies that the probability that our algorithm on input $G$ and $H$ behave differently is at most $1/10$. 
This implies that there exists an output $\widetilde \Lambda$, which is guaranteed to be an
additive $\epsilon^*$ approximation for $\Lambda_G$ and $\Lambda_H$. By the triangle inequality we obtain
$W_1(\Lambda_G,\Lambda_H) \le 2\epsilon^*$.
\end{proof}

Next we will construct the graph $H_1$. We need the following lemma to control the spectrum of the 
union of two disjoint graphs.

\begin{lemma}
\label{Lemma:Composition}
Let $G_1=(V_1,E_1)$ be a graph with $k$ vertices and let $G_2=(V_2,E_2)$ be a graph
with $\ell$ vertices, $V_1 \cap V_2 = \emptyset$. Let $\lambda_1,\dots,\lambda_k$ be the eigenvalues of the Laplacian of $G_1$ and
$\nu_{1},\dots\nu_{\ell}$ be the eigenvalues of the Laplacian of $G_2$. Then
$\lambda_1,\dots,\lambda_k, \nu_{1},\dots\nu_{\ell}$ are the eigenvalues of the Laplacian of $G^* = (V_1\cup V_2, E_1 \cup E_2)$.
\end{lemma}

\begin{proof}
It is easy to verify that the eigenvectors of the Laplacian of $G^*=(V_1\cup V_2,E_1 \cup E_2)$ are the eigenvectors
of the Laplacians of $G_1$ and $G_2$ filled up with zeros. The result follows immediately by observing that the corresponding
eigenvalues do not change.
\end{proof}

We then use the Claim below to construct $H_1$ from our graph $H$ with spectrum $\arg \min_{q\in S_n} \|\Lambda - q\|_1$.

\begin{claim}
There exists $N_1=N_1(\epsilon, \delta, d)$ such that for every $d$-bounded degree graph $H=(V,E)$
with $|V| \ge N_1$ with spectrum $\Lambda_H$ there is a $d$-bounded degree graph $H_1$ on $\epsilon |V|/4$
vertices such that $W_1(\Lambda_H,\Lambda_{H_1}) \le \epsilon^*$.
\end{claim}

\begin{proof}
Since the set of all spectra has an $\epsilon^*$-net with respect to the Wasserstein distance whose
size does only depend on $\epsilon^*$, we obtain that for every $d\ge 1$ the size of the
smallest graph whose spectrum has Wasserstein distance at most $\epsilon^*$ to the spectrum
of $H$ is a function of $\epsilon^*$ and $d$. 
In particular, there exists a graph $H'$ of size depending only on $\epsilon^*/2$ and $d$ with 
$W_1(\Lambda_H,\Lambda_{H'})\le \epsilon^*/2$. For sufficiently large $n$ 
we can now define $H_1$ to be the union of $\epsilon |V|/|V(H')|$ 
copies of $H'$ plus isolated vertices. By Lemma \ref{Lemma:Composition} we obtain the bound on the 
spectrum for sufficiently large $n$, i.e. we can define $N_1=N_1(\epsilon,\delta,d)$ such
that the bound on the spectrum holds for every $n \ge N_1$.
\end{proof}

Thus, our construction yields two graphs $H_1=(V_1,E_1)$ and $H_2=(V_2,E_2)$ such
that $W_1(\Lambda_{H_1}, \Lambda_H) \le \epsilon^*$ and $W_1(\Lambda_{H_2}, \Lambda_G) \le 2\epsilon^*$.
We can now finish the proof of our lemma by showing that $G^*=(V_1\cup V_2,E_1\cup E_2)$ has
Wasserstein distance at most $\delta$ to $H$ (and hence $l_1$-distance to $S_n$).
We obtain that
$
W_1(\Lambda_{G^*}, \Lambda_H) \le \frac{\epsilon}{4} \cdot W_1(\Lambda_{H_1},\Lambda_H) + (1-\frac{\epsilon}{4}) W_1(\Lambda_{H_2},\Lambda_H) 
\le \frac{\epsilon}{4} \cdot \epsilon^* + (1-\frac{\epsilon}{4}) (W_1(\Lambda_{H_2},\Lambda_G) + W_1(\Lambda_G,\Lambda_H))
\le (1-\frac{\epsilon}{4}) (2 \epsilon^* + \delta + 2 \epsilon^*) + \frac{\epsilon}{4} \cdot \epsilon^*
\le \delta - \frac{\epsilon \delta}{4}  + 4 \epsilon^*
\le \delta
$
for our choice of $\epsilon^* \le \epsilon \delta/16$.
\end{proof}
This also finishes the proof of our main theorem.
\end{proof}

\section{Experiments}
In this section we demonstrate the practical viability of our spectrum estimation approach.  We considered 15 undirected network data\-sets that are publicly available on the Stanford Large Network Dataset Collection~\cite{snapnets}.   These data\-sets include three road networks (ranging from 1M nodes to 1.9M nodes), six co-authorship networks including DBLP collaboration network (317k nodes, 1M edges), and six social networks including small portions of Facebook (4k nodes, 88k edges), Twitter (81k nodes, 1.7M edges), and Google+ (107k nodes, 13M edges), as well as the LiveJournal social graph, (4M nodes, 34M edges), Orkut (3M nodes, 117M edges), and a portion of the Youtube user follower graph (1M nodes, 2.9M edges).

All experiments were run in Matlab on a MacBook Pro laptop, using Matlab's \emph{graph} datastructure to store the networks.   For each network, we ran our spectrum estimation algorithm 20 times and then averaged the 20 returned spectra.   Each of the spectra was obtained by simulating 10k independent random walks of length 20 steps each, and then leveraging our \textsc{ApproxSpectralMoment} algorithm of Section~\ref{sec:a1} to estimate the first 20 spectral moments.  These moments were then provided as input to the \textsc{MomentInverse} algorithm, which returned an approximation to the spectrum.   The reason for repeating the spectrum approximation algorithm several times and and averaging the returned spectra was due to the tendency of the linear program to output sparsely supported spectra---perhaps due to the particular instabilities of Matlab's linear program solver.  Empirically, averaging several of these runs seemed to yield a very consistent spectrum that agreed closely with the ground truth for those networks on which we could compute the exact spectrum.

As the number of random walks was independent of the size of the graph, the runtime did not increase significantly for the larger graphs, and the computation time for each graph was at most 5 minutes and mostly is contributed to the optimization procedure which is independent of the graph.

For the smaller networks---those with $<50k$ nodes, we computed the exact spectrum in addition to running our spectrum estimation algorithm.  In all cases, our reconstruction achieved an earthmover distance at most $0.03$ from the actual spectrum.  For the larger networks, it was computationally intractable to compute the exact spectrum. 

\subsection{Discussion of Network Spectra}
The recovered spectra of the fifteen graphs considered are depicted in Figure~\ref{fig:exp}.   The emphasis of this work is the proposal of an efficient algorithm for recovering the spectrum, as opposed to a detailed analysis of the structural implications of the observed spectra of the graphs considered.  Nevertheless, the spectra exhibit several curious phenomena worth discussing.

The most immediate observations are that the spectra of the different classes of network look quite distinct, with the road networks exhibiting very distinctive linear spectra.   In hindsight, this should not be entirely unexpected.  Many portions of road networks resemble 2-d grids, and, for a random walk on a 2-d grid, the probability of returning to the origin after $t$ timesteps will scale roughly as $1/t$ for even $t$ (and will be 0 for odd $t$).  These return probabilities correspond to the moments of a uniform distribution supported on the interval $[-1,1]$, which is then translated to the uniform distribution over $[0,2]$ when the spectrum of the Laplacian is obtained from that of the random walk.

  The collaboration networks all have rather similar spectra, despite the DBLP network having a factor of 70 more nodes and edges than some of the other collaboration graphs.  This nicely illustrates the phenomena that certain classes of graph have spectra that approach a limiting shape, independent of their size.    

The spectra of the social networks appear more diverse.  One notable feature---particularly of the Google+, Orkut, and YouTube graphs is the significant number of eigenvalues that are extremely close to 1.  These eigenvalues correspond to eigenvectors near the kernel of the adjacency matrix, hence indicate that these adjacency matrices are significantly rank deficient.  In contrast to Facebook, Twitter, and LiveJournal where individuals tend to be more unique, perhaps many Google+ and YouTube users can be cleanly represented.

\begin{figure*}
\centering
\includegraphics[height=8.2in]{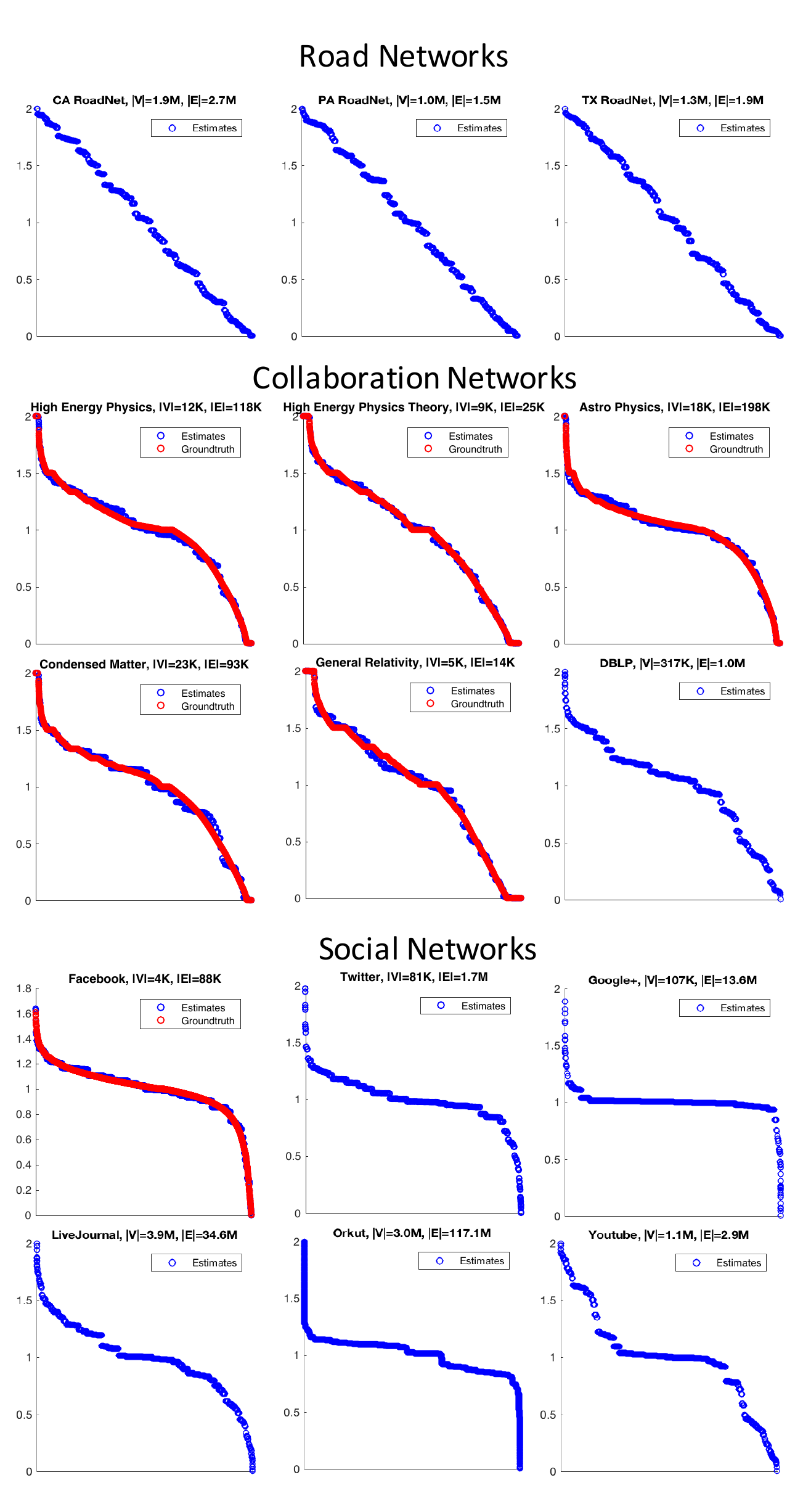}
\caption{Plots of the results of running our spectrum estimation algorithm on 15 graphs that are publicly available from the Stanford Large Network Dataset Collection.
For the graphs with $<50k$ nodes, the true spectrum (red) is superimposed on the estimated spectrum (blue).  All experiments were run in Matlab on a MacBook Pro laptop, and the estimated spectra required less than 5 minutes of computation time per graph.  Matlab code will be publicly available from our websites after the conclusion of the review process.\label{fig:exp}}
\end{figure*}

\newpage
\bibliographystyle{abbrv}
\bibliography{bibfile}

\newpage
\appendix

\end{document}